\newcommand{\comment}[1]{} 
\newtheorem{thm}{Theorem}[section]
\newtheorem{lem}[thm]{Lemma}
\newtheorem{prop}[thm]{Proposition}
\newcounter{quote}
\newcommand{\RR}{\mathbb{R}}
\newcommand{\Rr}{\mathcal{D}}
\newcommand{\Td}{\mathcal{T}}
\newcommand{\Img}[2]{#1 \left[ #2 \right] }
\newcommand{\set}[1]{\left\lbrace #1\right\rbrace}
\newcommand{\Ar}{\mathcal{A}}
\newcommand{\visp}[4]{\mathcal{V}_{#1}^{#4} \left( #2,#3\right)}
\newcommand{\levp}[3]{{ (\leq #1)level^{#3}} \left( #2\right) }
\newcommand{\mtdu}[2]{\left(#1,#2\right)}
\newcommand{\mttu}[3]{\left(#1,#2, #3 \right)}
\title{Computing $k$-Crossing Visibility through $k$-levels}
\author{Frank Duque\thanks{Email: frduquep@unal.edu.co. Research supported by the Universidad Nacional de Colombia research, grant HERMES-58357. }}
\affil{Escuela de Matem\'aticas, Universidad Nacional de Colombia Sede Medell\'in}
\begin{document}


\maketitle
\begin{abstract}
Let $\Ar$ be a set of straight lines in the plane (or planes in $\RR^3$). The $k$-crossing visibility of a point $p$ on $\Ar$ is the set $Q$ of points in the elements of $\Ar$ such that the segment $pq$, where $q\in Q$, intersects at most $k$ elements of $\Ar$. In this paper, we present algorithms for computing the $k$-crossing visibility. Specifically, we provide $O(n\log n + kn)$ and $O(n\log n + k^2n)$ time algorithms for sets of $n$ lines in the plane and arrangements of $n$ planes in $\RR^3$, which are optimal for $k=\Omega(\log n)$ and $k=\Omega(\sqrt{\log n})$, respectively. We also introduce an algorithm for computing $k$-crossing visibilities on polygons, which achieves the same asymptotic time complexity as the one presented by Bahoo et al. The techniques proposed in this paper can be easily adapted for computing $k$-crossing visibilities on other instances where the $(\leq k)$-level is known.

\end{abstract}


\section{Introduction}\label{sec:intro}
Visibility is a natural phenomenon in everyday life. We use it to determine our location, plan movements without colliding with other objects, or interact with our environment.
Similarly, the concept of visibility is fundamental in several areas of Computer Science, such as Robotics \cite{briggs2000_robot_visibility, Robot_motion_planning2012, lozano1979_collision_free, peshkin1986robots},
Computer Vision \cite{faugeras1993three, Aspect_Graphs1991}, and Computer Graphics \cite{cohen1993radiosity, toth2017handbook, dorward1994survey}. In the case of Computational Geometry, the notion of visibility has also been used extensively in the context of the art gallery problem \cite{o1987art, TOUSSAINT1986165, ghosh2007visibility, shermer1992recent}.

In \cite{aichholzer2018modem}, the authors motivated the study of a new kind of visibility inspired by the development of wireless network connections.
Consider locating a router in a building, so that any device within the building receives a strong enough signal to guarantee a stable Internet connection.
There are two main limitations
when trying to connect a device to a wireless network:
its distance to the wireless router and the number of walls that separate it from the router.
However, in many buildings, the primary limitation is the number of walls that separate the device from the router, rather than the distance between them.
Given a set $\Ar$ of straight lines, rays and segments in $\RR^2$ (or planes in $\RR^3$),  we say that two points $p$ and $q$ are $k$-crossing visible if the interior of the line segment $pq$ intersects at most $k$ elements of $\Ar$.
The set of points in $\RR^2$  ($\RR^3$) that are $k$-crossing visible from $p$ is called the $k$-crossing visibility region of $p$.
Thus, if $p$ and $\Ar$ correspond to the router and the obstacles, respectively, the $k$-crossing visibility region represents the possible positions of the devices, ensuring they are separated by at most $k$ walls. In Figure~\ref{fig:visibility}, we illustrate in yellow the $2$-crossing visibility region of a point on a set of lines.
\begin{figure}[hbt]
\centering
	\includegraphics[width=0.4\textwidth]{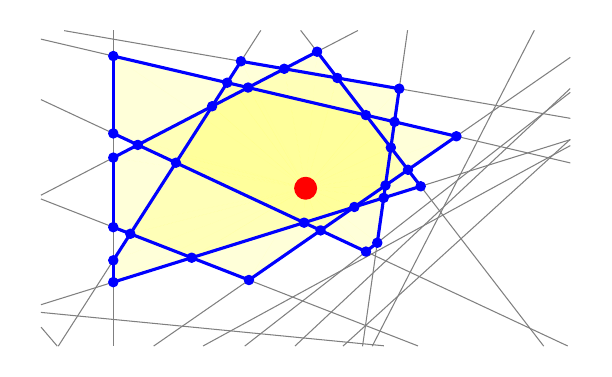}
	\caption{Illustration of the $2$-crossing visibility region (yellow) and the $2$-crossing visibility (blue) of the red point on a set of lines.}
	\label{fig:visibility}
\end{figure}

Let $\Ar$ be a set of straight lines, rays and segments in $\RR^2$ (or planes in $\RR^3$).
The $k$-crossing visibility of a point $p$ in $\RR^2$ ($\RR^3$) on $\Ar$, denoted by $\visp{k}{p}{\Ar}{}$, is the portion of the elements of $\Ar$ that are in the $k$-crossing visibility region of $p$. 
Figure~\ref{fig:visibility} shows $\visp{2}{p}{\Ar}{}$ in blue for a point $p$ and a set of lines $\Ar$.
In this paper, we are interested in algorithms for computing $\visp{k}{p}{\Ar}{}$ and determining upper bounds for its size.

The concept of $k$-crossing visibility was first introduced by Mouawad and Shermer \cite{mouawad1994superman} in what they originally called the Superman problem: Given a simple polygon $P$, a sub-polygon $Q\subset P$, and a point $p$ outside $P$, determine the minimum number of edges of $P$ that must be made opaque such that no point of $Q$ is $0$-crossing visible to $p$. Dean et al. \cite{dean1988recognizing} studied pseudo-star-shaped polygons, in which the line of visibility can cross one edge, corresponding to $k$-crossing visibility where $k=1$. Bajuelos et al \cite{bajuelos2012hybrid} subsequently explored the concept of $k$-crossing visibility for an arbitrarily given $k$, and presented an $O(n^2)$-time algorithm to construct $\visp{k}{p}{P}{}$ for an arbitrary given point $p$.
Recently Bahoo et al.\cite{Computing_k_Visibility_Polygon} introduced an algorithm that computes $\visp{k}{p}{P}{}$ in $O(nk)$ time.

\begin{thm}[Bahoo et al. \cite{Computing_k_Visibility_Polygon}]\label{thm:polygon_visibility}
 Given a simple polygon $P$ with $n$ vertices and a query point $p$ in $P$, the region of $P$ that is $k$-crossing visible from $p$,
 can be computed in $O(kn)$ time.
\end{thm}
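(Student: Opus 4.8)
The plan is to reduce the problem to computing a single radial ``depth contour'' around $p$, and then to compute that contour by a traversal of $\partial P$ that generalizes the classical linear‑time visibility‑polygon algorithm.

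\textbf{Step 1: radial reformulation.} For a direction $\theta$ let $d_j(\theta)$ denote the distance from $p$ to the $j$‑th closest point of $\partial P$ on the ray leaving $p$ in direction $\theta$ (with $d_j(\theta)=\infty$ if that ray meets $\partial P$ fewer than $j$ times). A point $q\in P$ at distance $r$ and direction $\theta$ from $p$ is $k$‑crossing visible exactly when the open segment $pq$ meets at most $k$ edges, i.e.\ when $r\le d_{k+1}(\theta)$. Hence $\visp{k}{p}{P}{}=P\cap R_k$, where $R_k:=\{\,p+r(\cos\theta,\sin\theta):0\le r\le d_{k+1}(\theta)\,\}$ is the region bounded by the ``$(k{+}1)$‑st radial contour'' $\theta\mapsto d_{k+1}(\theta)$; so it suffices to compute this contour and then clip it against $P$ (an easy overlay with $\partial P$, doable within the same bound). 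Observe that each edge $e$ of $P$ is hit by the rays of one angular interval $I_e$ of length $<\pi$, on which the hit distance is the convex function $\operatorname{dist}(p,\operatorname{line}(e))/\cos(\theta-\theta_e)$; two such arcs cross at most once, so $d_{k+1}$ behaves exactly like the $(k{+}1)$‑level of an arrangement of $n$ pseudo‑segments arising from a simple polygon.

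\textbf{Step 2: the traversal.} To compute $d_{k+1}(\cdot)$ I would avoid sorting the $n$ vertex directions around $p$ (that costs $\Theta(n\log n)$, too much when $k=o(\log n)$) and instead exploit that $\partial P$ is given in cyclic order and, traversed counterclockwise with $p$ in its interior, has total turning $2\pi$ as seen from $p$. Generalize the stack‑based algorithm of Joe--Simpson / El~Gindy--Avis: while walking along $\partial P$ maintain not one but up to $k+1$ nested ``profiles'', the $j$‑th being $d_j(\cdot)$ over the angular window swept so far. A forward turn pushes the current edge onto the appropriate profiles and may create a window chord; a backward turn pops edges; and whenever the boundary winds back over a direction already covered, the profiles over that range all shift down by one and a fresh edge takes over a layer — this is precisely the event at which points in that direction acquire one more crossing. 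At the end, $R_k$ is read off from the first $k+1$ profiles and intersected with $P$.

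\textbf{Step 3: analysis, and the main obstacle.} Correctness follows from the loop invariant that after each boundary vertex the maintained profiles equal $d_1,\dots,d_{k+1}$ restricted to the swept angular range. For the running time, each edge is pushed onto and popped from each of the $k+1$ profiles only $O(1)$ times amortized (a stack argument per layer, exactly as for $k=0$), giving $O(kn)$; the same bound caps the output size, since the first $k+1$ radial layers of $n$ pairwise‑once‑crossing arcs coming from a simple polygon have total complexity $O(kn)$ — this is the $(\le k)$‑level bound that the rest of the paper also builds on. The delicate part, and where I expect the real work to lie, is Step~2: making precise how edges migrate between consecutive profiles at backward/winding events, stitching window chords consistently across layers, and proving the amortization so the total is $O(kn)$ rather than $O(kn+n\log n)$. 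Degenerate configurations — a ray through a vertex, several vertices collinear with $p$, or $p$ on a supporting line — would be handled by a fixed tie‑breaking convention, which is routine but needs to be stated.
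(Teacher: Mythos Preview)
Your outline is a plausible route to $O(kn)$, but it is \emph{not} the route this paper takes, and the part you flag as ``delicate'' is precisely where your proposal stops being a proof. The paper never touches a radial sweep or a multi-layer stack; instead it applies the projective-style map $\Td(x,y)=(x/y,1/y)$ so that, by Lemma~\ref{lem:lelvels_visibility_paths:R2}, $k$-crossing visibility from the origin becomes the $(\le k)$-level of $\Td[P]$ in each half-plane. The whole argument then reduces to two off-the-shelf ingredients: (i) build a vertical decomposition of the transformed object in $O(n)$ time, and (ii) read off the $(\le k)$-level from that decomposition in $O(kn)$ time by shooting $k{+}1$ steps upward from each vertex (Lemma~\ref{lem:visibility_from_vertical_decomposition:R2}). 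The only non-obvious step is (i), because $\Td[P]$ is unbounded (edges crossing the $x$-axis become rays); the paper fixes this by a small geometric trick---scaling the ``red'' part of $P$ so that a modified polygon $P'$ sits entirely in $\Rr^+$, hence $\Td[P']$ is again a simple polygon, and then invoking Chazelle's linear-time trapezoidation. No sorting, no new sweep, no per-layer amortization.

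By contrast, your Step~2 \emph{is} a new algorithm: a $(k{+}1)$-layer generalization of Joe--Simpson/El~Gindy--Avis. Your amortization claim (``each edge is pushed/popped $O(1)$ times per profile, exactly as for $k=0$'') is asserted, not argued; the $k=0$ stack argument relies on the fact that once an edge is popped it is gone, whereas in your setting an edge can migrate between adjacent layers at every winding event, and bounding the total number of such migrations by $O(kn)$ is exactly the content you would need to supply. It may well work (indeed it is close in spirit to what Bahoo et al.\ actually do), but as written it is a sketch with the hard lemma missing. The paper's approach trades that algorithmic work for a dependence on Chazelle's triangulation---a heavy but black-box hammer---so each approach buys something: yours would be self-contained and implementable, theirs is short and modular but inherits Chazelle's notorious complexity.
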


In this work, we present another proof of Theorem~\ref{thm:polygon_visibility} and we prove Theorem~\ref{thm:lines_visibility}, Theorem~\ref{thm:planes_visibility}, Proposition~\ref{prop:complexity_lines}, and Proposition~\ref{prop:complexity_planes}.

\begin{thm}\label{thm:lines_visibility}
 Let $\Ar$ be a set of $n$ lines in the plane, and let $p$ be a query point. Then $\visp{k}{p}{\Ar}{}$ can be computed in $O(n\log n + kn)$ time.
\end{thm}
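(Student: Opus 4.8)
The plan is to reduce the computation of the $k$-crossing visibility of $p$ on a set $\Ar$ of $n$ lines to the computation of the $(\leq k)$-level of an associated line arrangement, which is a well-understood structure. First I would fix the query point $p$ and, by a projective transformation (or simply by working in polar coordinates around $p$), parametrize directions by an angle $\theta \in [0, 2\pi)$. Along the ray from $p$ in direction $\theta$, the lines of $\Ar$ that the ray crosses appear in a definite order, sorted by distance from $p$; the $k$-crossing visibility in that direction consists precisely of the first $k+1$ points of intersection of the ray with lines of $\Ar$ (more carefully, every point $q$ on a line $\ell \in \Ar$ such that the open segment $pq$ meets at most $k$ lines — so $q$ is "seen" if $\ell$ is among the first $k+1$ lines hit, counted with the subtlety that the segment is open). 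The key observation is that, as $\theta$ varies, the sorted-by-distance order of the crossed lines changes exactly at the angular positions where the ray passes through a vertex of the arrangement $\Ar$, and the sequence of "the first $k+1$ lines" is governed by the $(\leq k)$-level of a dual arrangement.

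Concretely, I would dualize: map each line $\ell_i \in \Ar$ to a point, and the rotating ray around $p$ to a vertical line sweeping a dual line arrangement, so that the lines crossed by the ray at distance order $1, 2, \dots$ correspond to the levels $1, 2, \dots$ in the dual. Then $\visp{k}{p}{\Ar}{}$ is read off from the first $k+1$ levels of this dual arrangement restricted to the relevant range — i.e.\ from the $(\leq k)$-level. Since the $(\leq k)$-level of an arrangement of $n$ lines has complexity $O(kn)$ and can be computed in $O(n\log n + kn)$ time (Everett, Robert, van Kreveld / Chan), this gives both the size bound and the running time. The remaining work is bookkeeping: for each cell of the $(\leq k)$-level one recovers which lines of $\Ar$ contribute to $\visp{k}{p}{\Ar}{}$ in the corresponding angular interval, and which sub-segments of those lines are visible, by intersecting the angular interval with the lines; this is $O(1)$ per cell after the level is computed, so $O(kn)$ overall.

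I would then handle the boundary issues carefully: the segment $pq$ being \emph{open} means a point lying exactly on the $(k+1)$-st crossed line is still $k$-crossing visible, so the correct object is the $(\leq k)$-level rather than the $(<k)$-level; and a direction $\theta$ in which the ray is parallel to some line of $\Ar$, or passes through an intersection point of two or more lines, needs a tie-breaking convention (perturb, or treat these as measure-zero events contributing to the boundary of the visibility region). I would also note that the same argument covers the case where $\Ar$ contains rays and segments, not just full lines, by clipping each dual contribution to the angular range in which the ray/segment is actually hit — this only changes constants.

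The main obstacle I expect is not the reduction itself but making precise the correspondence between "the ray from $p$ crosses line $\ell$ as its $j$-th closest line, for $\theta$ in some interval" and "the dual point of $\ell$ lies on level $j$ over that interval," including getting the orientation/level conventions right on both sides of $p$ (the ray, not the full line through $p$, so one must split into two opposite half-pencils of directions or use a careful projective setup). Once that dictionary is established, invoking the known $O(n\log n + kn)$ bound for the $(\leq k)$-level and the $O(kn)$ complexity bound finishes the proof, and also transparently explains why the algorithm is optimal for $k = \Omega(\log n)$: the output itself can have size $\Theta(kn)$.
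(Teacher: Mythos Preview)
Your high-level strategy matches the paper's exactly: translate $p$ to the origin, split into the two half-pencils of rays (the paper uses the upper and lower open half-planes $\Rr^+$, $\Rr^-$), reduce ``first $k{+}1$ lines hit along each ray'' to the $(\leq k)$-level of an arrangement of $n$ lines, and then invoke the $O(n\log n + kn)$ algorithm of Everett et al.\ for that level. Your remarks about the open-segment convention, the splitting into two opposite half-pencils, and the output-size lower bound $\Theta(kn)$ are all on target and appear in the paper.

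The one place where your sketch diverges is precisely the step you flagged as the main obstacle. You propose to ``dualize: map each line $\ell_i\in\Ar$ to a point, and the rotating ray around $p$ to a vertical line sweeping a dual line arrangement.'' Standard point--line duality does not do this: it does not carry ``distance order from $p$ along a ray'' to ``vertical level order,'' and if lines go to points there is no dual line arrangement to take levels of. The paper sidesteps this by using not a duality but the projective transformation $\Td(x,y)=(x/y,\,1/y)$ (after placing $p$ at the origin). This $\Td$ sends lines to lines (so $\Td[\Ar]$ is again an arrangement of $n$ lines), sends every ray through the origin to a vertical line, and reverses distance-from-origin order into height order. Lemma~\ref{lem:lelvels_visibility_paths:R2} then states the dictionary you were looking for: $\visp{k}{O}{\Ar}{+}=\Td\!\left[\levp{k}{\Td[\Ar]}{+}\right]$, and similarly below the $x$-axis. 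With that one explicit map in hand, the rest of your outline goes through verbatim.
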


Given a set $\Ar$ of straight lines, rays and segments in the plane (or planes in $\RR^3$), the arrangement defined by $\Ar$ is the subdivision of the plane (space) defined by $\Ar$; in the following we refer to this arrangement as the arrangement of $\Ar$. The combinatorial complexity of an arrangement is its total number of vertices and edges (and faces).

\begin{prop}\label{prop:complexity_lines}
 The maximum combinatorial complexity of the arrangement defined by the $k$-crossing visibility on sets of $n$ straight lines in the plane is $\Theta(kn)$.
\end{prop}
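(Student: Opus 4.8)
The plan is to reduce the statement to the classical bound $\Theta(nk)$ on the combinatorial complexity of the $(\le k)$-level of an arrangement of $n$ lines (we may assume $k\ge 1$; for $k=0$ the claim is the familiar $\Theta(n)$ bound for ordinary visibility). Fix the query point $p$ and move it to infinity by a projective transformation $\tau$ chosen so that $\tau(p)$ is the point at infinity of the $y$-axis: then every ray out of $p$ becomes an upward vertical ray and the lines of $\Ar$ become $n$ non-vertical, pairwise non-parallel lines $\tau(\Ar)$. Travelling along a ray from $p$ the number of lines of $\Ar$ crossed increases by exactly one at each crossing, so a point $q$ on a line of $\Ar$ lies in $\visp{k}{p}{\Ar}{}$ if and only if at most $k$ lines of $\tau(\Ar)$ pass strictly below $\tau(q)$, i.e.\ $\tau(q)$ lies on the $(\le k)$-level of $\tau(\Ar)$. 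Hence $\tau$ carries the planar subdivision determined by $\visp{k}{p}{\Ar}{}$ — its visible line-portions together with its windows — onto the $(\le k)$-level of an arrangement of $n$ lines, up to $O(n)$ extra features created by the line at infinity and a constant number of window endpoints per level vertex; since $\tau$ is an incidence-preserving homeomorphism on the region involved, the two complexities agree up to constants. (A projectivity does not preserve betweenness globally, so to make ``number of lines crossed on $pq$'' literally a level count one first cuts the pencil of directions at $p$ by an arbitrary line through $p$ and treats the two half-pencils separately; on each, $\tau$ sends the segments $pq$ to genuine vertical rays, at the cost of a factor $2$.)

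For the upper bound it then suffices to recall that the $(\le k)$-level of $n$ lines has $O((k+1)n)=O(nk)$ vertices, edges and faces — the Clarkson–Shor bound, obtained by comparison with the lower envelope of a random subset of size $\Theta(n/(k+1))$, or via a Lovász-type counting lemma. Every window of $\visp{k}{p}{\Ar}{}$ ends at a level vertex and every visible line-portion is an edge at level $\le k$, so the arrangement determined by $\visp{k}{p}{\Ar}{}$ has $O(nk)$ complexity.

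For the matching lower bound I would exhibit, for every $1\le k\le n$, a set of $n$ lines and a point $p$ realising $\Omega(nk)$. By the reduction it is enough to build $n$ lines whose $(\le k)$-level has $\Omega(nk)$ vertices, and for this I would dualise a point set extremal for the number of $(\le k)$-subsets: $c=\max\{1,\lfloor n/(2k)\rfloor\}$ ``clusters'', each a tiny convex polygon with $\approx n/c\ (\ge k+1)$ points, placed in convex position along a large circle and pairwise far apart. A halfplane with at most $k$ points then meets a single cluster and cuts off a contiguous sub-chain of it, so the number of $(\le k)$-subsets is $\Omega\big(c\cdot(n/c)\cdot k\big)=\Omega(nk)$; the standard duality between $(\le k)$-subsets and the $(\le k)$-level then gives $n$ lines with an $(\le k)$-level of complexity $\Omega(nk)$, and $\tau^{-1}$ turns this into the required lines and point. (Dually one may describe the construction directly: $c$ bundles of $\approx n/c$ nearly-concurrent lines whose apices lie tangent to a convex arc facing $p$.)

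The step I expect to be the main obstacle is the first one — setting up $\tau$ and the splitting of the direction pencil so that the crossing-count/level correspondence is exact, and verifying that the windows and the features at infinity add only $O(n)$. Once that reduction is clean, the $O(nk)$ upper bound is the classical $(\le k)$-level estimate and the $\Omega(nk)$ construction is routine; the only point there to double-check is that the clusters are separated enough that no halfplane of size $\le k$ straddles two of them, so that sub-chains of distinct clusters are genuinely distinct $(\le k)$-subsets.
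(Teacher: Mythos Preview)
Your proposal is correct and takes essentially the same approach as the paper: the paper's transformation $\Td(x,y)=(x/y,1/y)$ is exactly a projective map sending the origin to the vertical point at infinity, the split into $\Rr^+$ and $\Rr^-$ is your two half-pencils, and Lemma~\ref{lem:lelvels_visibility_paths:R2} is the crossing-count/level correspondence you describe. The only difference is that the paper cites the $\Theta(nk)$ bound on the $(\le k)$-level directly from Alon et al.\ rather than recalling Clarkson--Shor for the upper bound and rebuilding the cluster construction for the lower bound.
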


\begin{thm}\label{thm:planes_visibility}
 Let $\Ar$ be an arrangement of $n$ planes in $\RR^3$, and let $p$ be a query point. Then $\visp{k}{p}{\Ar}{}$ can be computed in $O(n\log n + k^2n)$ time.
\end{thm}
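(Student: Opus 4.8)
The plan is to reduce the computation of $\visp{k}{p}{\Ar}{}$ to that of a $(\leq k)$-level of an arrangement of $n$ planes, in the spirit of the paper's title. Assume $p$ lies on no plane of $\Ar$ (otherwise perturb it). For a point $q\in\RR^3$ let $\lambda(q)$ denote the number of planes of $\Ar$ that strictly separate $p$ from $q$. Since a line segment meets a plane in at most one point, the open segment $pq$ crosses a plane $h\in\Ar$ iff $h$ separates $p$ from $q$, so $q$ belongs to the $k$-crossing visibility region of $p$ exactly when $\lambda(q)\le k$. The function $\lambda$ is constant on each cell of the arrangement of $\Ar$, equals $0$ on the cell of $p$, and changes by exactly $1$ across each plane; thus $\lambda$ is the \emph{level} of the arrangement of $\Ar$ measured from the cell containing $p$ (a projective transformation sending $p$ to a point at infinity identifies it with the usual level, i.e.\ the number of planes below a point). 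Hence the $k$-crossing visibility region of $p$ is, up to a projective transformation, the $(\leq k)$-level region of an arrangement of $n$ planes in $\RR^3$.

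The next step is to express $\visp{k}{p}{\Ar}{}$ in terms of this region. For $q$ lying on a plane $h_0\in\Ar$ one checks directly that $q$ is $k$-crossing visible from $p$ iff at most $k$ of the planes other than $h_0$ separate $p$ from $q$, i.e.\ iff $\lambda(q)\le k$; equivalently the $2$-face of the arrangement of $\Ar$ through $q$ has its two incident cells at levels $\le k$ and $\le k+1$ (or both $\le k$). Since $\{\lambda\le k\}$ is exactly the union of the closed cells of level at most $k$, it follows that $\visp{k}{p}{\Ar}{}=\bigl(\bigcup_{h\in\Ar}h\bigr)\cap\{\lambda\le k\}$ is precisely the union of all $2$-faces of the arrangement of $\Ar$ (together with the edges and vertices on their boundaries) that lie at level at most $k$. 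In particular $|\visp{k}{p}{\Ar}{}|$ has the same order as the complexity of the $(\leq k)$-level.

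To finish, I would invoke the known facts about $(\leq k)$-levels of $n$ planes in $\RR^3$: by the Clarkson--Shor bound the complexity is $O(k^2n)$, and the $(\leq k)$-level, with all its faces of every dimension, can be computed in $O(n\log n+k^2n)$ time. Applying the projective transformation ($O(n)$ time, combinatorially harmless), running the $(\leq k)$-level algorithm, and then reading off the $2$-faces, edges and vertices at level at most $k$ (the level of each $2$-face being determined by its two incident cells) yields $\visp{k}{p}{\Ar}{}$ in $O(n\log n+k^2n)$ time overall, the $O(n\log n)$ term absorbing the sorting and preprocessing of the level algorithm. The step I expect to be the main obstacle is marshalling an $O(n\log n+k^2n)$-time \emph{construction} of the $(\leq k)$-level in three dimensions -- not merely its complexity bound -- and verifying that no extraneous logarithmic factor creeps in; the combinatorial reduction and the readout of the $2$-skeleton are then routine.
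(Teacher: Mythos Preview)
Your approach is correct and coincides with the paper's: after translating $p$ to the origin, the paper applies the explicit projective map $\Td(x,y,z)=(x/z,y/z,1/z)$ (treating the half-spaces $z>0$ and $z<0$ separately, since $\Td$ is singular on $z=0$) and uses Lemma~\ref{lem:lelvels_visibility_paths:R3} to identify $\visp{k}{O}{\Ar}{\pm}$ with the $(\leq k)$-level of $\Img{\Td}{\Ar}$, exactly your ``projective transformation sending $p$ to a point at infinity.'' Your anticipated obstacle---an $O(n\log n+k^2n)$-time \emph{construction} of the $(\leq k)$-level of $n$ planes in $\RR^3$---is resolved by the paper's Theorem~\ref{thm:planes_levels} (Chan et~al.).
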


\begin{prop}\label{prop:complexity_planes}
 The maximum combinatorial complexity of the arrangement defined by the $k$-crossing visibility on sets of $n$ planes in $\RR^3$ is $\Theta(k^2n)$.
\end{prop}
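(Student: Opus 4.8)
The plan is to show that the combinatorial complexity of $\visp{k}{p}{\Ar}{}$ equals, up to constant factors (and an additive $O(n)$), the number of triple points $h_i\cap h_j\cap h_\ell$ of $\Ar$ that are $k$-crossing visible from $p$, and then to identify that quantity with the number of level-$\leq k$ vertices of an arrangement of $n$ planes in $\RR^3$, for which the tight bound $\Theta(nk^2)$ is classical: the upper bound is Clarkson--Shor, and the lower bound is witnessed by a standard construction that I will mimic directly.

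For the structural reduction, I would first assume general position and, in particular, that no plane of $\Ar$ passes through $p$; such a plane is never met by the interior of a segment emanating from $p$, hence is irrelevant. Fix $h_i\in\Ar$. For $q\in h_i$ and $j\neq i$, the interior of the segment $pq$ meets $h_j$ precisely when $q$ and $p$ lie strictly on opposite sides of $h_j$, that is, precisely when $q$ lies in one fixed open half-plane of $h_i$ bounded by the line $h_i\cap h_j$. Thus the number of planes met by the interior of $pq$, as a function of $q\in h_i$, is a sum of indicators of half-planes that changes by $\pm1$ across each line $h_i\cap h_j$; the $k$-crossing visible part of $h_i$ is the sublevel set where this function is at most $k$, and its boundary consists of edges of the planar line arrangement $\{\,h_i\cap h_j : j\neq i\,\}$ drawn on $h_i$. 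Consequently every vertex of $\visp{k}{p}{\Ar}{}$ is a vertex of such a line arrangement, i.e.\ a triple point of $\Ar$, and inspecting the four quadrants around it shows it is incident to $\visp{k}{p}{\Ar}{}$ only when its radial level with respect to $p$ --- the number of planes met by the interior of the segment from $p$ to it --- is at most $k+2$. Applying Euler's formula to the induced planar subdivision on each plane (inside a large ball) bounds the numbers of edges and faces by a constant times the number of vertices plus $O(n)$, giving the claimed reduction.

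It remains to count triple points of radial level at most $k+O(1)$. Here I would apply a projective transformation sending $p$ to a point at infinity, chosen so that no plane of $\Ar$ is mapped onto the image of the plane at infinity; then ``strictly separated from $p$ by $h_j$'' turns into ``strictly below the image of $h_j$'', so that the radial level of a triple point equals its ordinary level in the transformed arrangement of $n$ planes, and the Clarkson--Shor $(\leq k)$-level bound yields $O(nk^2)$ --- alternatively one bounds the number of low-radial-level triple points directly via a Clarkson--Shor random-sampling argument and skips the transformation. For the matching lower bound I would place $n/k$ well-separated points on a small sphere centered at $p$ and, at each such point, put a bundle of $k$ planes in general position whose $\binom{k}{3}$ triple points all lie in a tiny ball around that point, arranging the placement so that no bundle occludes another from $p$; then the only planes that can separate such a triple point from $p$ are the other $k-3$ planes of its own bundle, so all $\binom{k}{3}=\Theta(k^3)$ triple points of each bundle are $k$-crossing visible, for a total of $(n/k)\cdot\Theta(k^3)=\Theta(nk^2)$ vertices of $\visp{k}{p}{\Ar}{}$. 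Together with the upper bound this gives $\Theta(k^2 n)$.

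The step I expect to demand the most care is pinning down the count of low-radial-level triple points: in the projective route one must verify that sidedness is genuinely preserved --- no plane of $\Ar$, and none of the counted triple points, may land on the image of the plane at infinity --- and that ``interior of the segment from $p$'' translates into the ordinary ``below'' relation with no plane-dependent offset, so that radial level and ordinary level agree exactly; in the direct route one must set up the random sample so that radial levels behave like ordinary levels under deletion. The remaining pieces --- the sublevel-set description of the visible part of each plane, the Euler-formula bookkeeping, the non-occlusion in the bundle construction, and the citation of Clarkson--Shor --- are routine, and the same scheme specialized to $\RR^2$ is what underlies Proposition~\ref{prop:complexity_lines}.
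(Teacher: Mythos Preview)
Your plan is correct, but it reconstructs from scratch what the paper obtains in one stroke from its central machinery. The paper's proof (stated only by analogy with the proof of Proposition~\ref{prop:complexity_lines}) simply invokes Lemma~\ref{lem:lelvels_visibility_paths:R3}: the transformation $\Td$ is a combinatorial bijection between $\visp{k}{O}{\Ar}{+}$ and $\levp{k}{\Img{\Td}{\Ar}}{+}$, so both the upper and the lower bound transfer directly from the $\Theta(nk^2)$ bound on the $(\leq k)$-level of $n$ planes (Theorem~\ref{thm:levels_hyperplanes}). Your ``projective transformation sending $p$ to a point at infinity'' is precisely what $\Td$ accomplishes, and the sidedness issue you flag as the most delicate step is exactly what Proposition~\ref{prop:T_preserves_order:r3} and Proposition~\ref{prop:lines_origin_to_vertical:r3} settle; so your upper-bound route is the paper's route, minus its lemma packaging and plus some Euler-formula bookkeeping that the bijection renders unnecessary. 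Where you genuinely diverge is the lower bound: you build an explicit bundle construction with $n/k$ clusters of $k$ planes each, whereas the paper takes a known extremal $(\leq k)$-level arrangement, places it in $\Rr^+$, and pulls it back through $\Td$. Your construction is pleasantly self-contained and avoids citing the lower-bound half of Clarkson--Shor, but the paper's transfer is shorter given the framework already in place.
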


Note that, by Proposition~\ref{prop:complexity_lines} and Proposition~\ref{prop:complexity_planes}, Theorem~\ref{thm:lines_visibility} and Theorem~\ref{thm:planes_visibility} are optimal for $k=\Omega(\log n)$ and $k=\Omega(\sqrt{\log n})$, respectively. These theorems are also optimal for $k=0$, by Lemma~\ref{lem:lelvels_visibility_paths:R2} and Lemma~\ref{lem:lelvels_visibility_paths:R3}: in this case, the problem is equivalent to computing lower and upper envelopes for sets of lines in the plane or sets of planes in space, which have a time complexity of $\Omega(n\log n)$.

Given a set $\Ar$ of straight lines, rays and segments in the plane (or planes in $\RR^3$), the $(\leq k)$-level-region of $\Ar$ is the set of points in $\RR^2$ (or $\RR^3$) with at most $k$ elements of $\Ar$ lying above it. In the following, we denote by $\levp{k}{\Ar}{}$, the portion of the elements of $\Ar$ that are in the $(\leq k)$-level-region of $\Ar$.

Let $\Td$ be the transformation
\begin{align*}
 \Td\left(\mtdu{x}{y}\right)&=\mtdu{x/y}{1/y}& &\text{ in the $\RR^2$ case, or} \\
 \Td\left(\mttu{x}{y}{z}\right)&=\mttu{x/z}{y/z}{1/z}&  &\text{ in the $\RR^3$ case.}
\end{align*}
In this paper, we obtain a linear time reduction from the problem of obtaining $\visp{k}{p}{\Ar}{}$ to the problem of obtaining $\levp{k}{\Ar}{}$, by applying $\Td$. As a consequence, we obtain another proof of Theorem~\ref{thm:polygon_visibility} and we prove Theorem~\ref{thm:lines_visibility}, Theorem~\ref{thm:planes_visibility}, Proposition~\ref{prop:complexity_lines}, and Proposition~\ref{prop:complexity_planes}. This reduction can be easily adapted for obtaining $k$-crossing visibilities on other instances where the $(\leq k)$-levels are known.

This paper is organized as follows. In Section~\ref{sec:R2}, we first determine several properties of $\Td$, then we use them to prove Theorem~\ref{thm:lines_visibility}, Theorem~\ref{thm:polygon_visibility}, and Proposition~\ref{prop:complexity_lines}. In Section~\ref{sec:R3}, we translate the properties of $\Td$ from $\RR^2$ to $\RR^3$ obtaining Theorem~\ref{thm:planes_visibility} and Proposition~\ref{prop:complexity_planes}.

\section{Results in $\RR^2$} \label{sec:R2}

Let $\Rr$ be the set of points $\mtdu{x}{y}\in \RR^2$ such that $y\neq 0$. Throughout this section, $O$ denotes the point $\mtdu{0}{0}\in \RR^2$ and $\Td$ denotes the transformation $\Td:\Rr\rightarrow \Rr$ defined by $\Td\left(\mtdu{x}{y}\right)=\mtdu{x/y}{1/y}$.

In this section, we prove that $\Td$ establishes a connection between $k$-crossing visibility and $(\leq k)$-levels in $\RR^2$. Then, we use this result to prove Theorem~\ref{thm:polygon_visibility}, Theorem~\ref{thm:lines_visibility}, and Proposition~\ref{prop:complexity_lines}.

\subsection{Properties of $\Td$}

Given $D\subset \Rr$, we denote by $\Img{\Td}{D}$ the image of $D$ under $\Td$. We also denote by $\Img{\Td}{\Ar}$ the set of images of the elements of $\Ar$ under $\Td$. In this section, we first prove several properties of $\Td$. Then, we determine $\Img{\Td}{D}$ for different instances of $D$. Finally, we prove that $\visp{k}{O}{\Ar}{}$ can be obtained from $\levp{k}{\Img{\Td}{\Ar}}{}$.

\begin{prop}
 $\Td$ is self-inverse.
\end{prop}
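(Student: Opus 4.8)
The plan is to verify directly that applying $\Td$ twice returns every point of $\Rr$ to itself. First I would take an arbitrary point $\mtdu{x}{y}\in\Rr$, so that $y\neq 0$, and compute $\Td\left(\mtdu{x}{y}\right)=\mtdu{x/y}{1/y}$. Since $1/y\neq 0$, this image again lies in $\Rr$, so the composition is well-defined on all of $\Rr$ and $\Td\circ\Td$ maps $\Rr$ to $\Rr$. Then I would apply $\Td$ a second time to $\mtdu{x/y}{1/y}$: the first coordinate becomes $(x/y)/(1/y)=x$ and the second becomes $1/(1/y)=y$, so $\Td\left(\Td\left(\mtdu{x}{y}\right)\right)=\mtdu{x}{y}$. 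Hence $\Td\circ\Td$ is the identity on $\Rr$, which is exactly the assertion that $\Td$ is self-inverse.

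There is essentially no obstacle here; the only point requiring the slightest care is the domain issue, namely checking that $\Td$ does not send any point of $\Rr$ to a point with zero second coordinate, so that the second application of $\Td$ makes sense. This is immediate since the second coordinate of $\Td\left(\mtdu{x}{y}\right)$ is $1/y$, which is nonzero precisely because $y\neq 0$. It then also follows that $\Td$ is a bijection from $\Rr$ to itself, being its own two-sided inverse, a fact that will be used implicitly when we later pass freely between $D$ and $\Img{\Td}{D}$.
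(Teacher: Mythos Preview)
Your proof is correct and follows exactly the paper's approach: a direct computation showing $\Td\circ\Td(\mtdu{x}{y})=\Td(\mtdu{x/y}{1/y})=\mtdu{x}{y}$. You add the explicit check that $1/y\neq 0$ so the composition is well-defined on $\Rr$, which the paper leaves implicit, but otherwise the arguments are identical.
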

\begin{proof}
Let $\mtdu{x}{y} \in \Rr$, then $\Td \circ \Td (\mtdu{x}{y})= \Td (\mtdu{x/y}{1/y}) = \mtdu{x}{y}$.
\end{proof}

\begin{prop}\label{prop:lines_to_lines:R2}
 $\Td$ maps straight lines to straight lines. More precisely, if $L$ is the straight line in $\Rr$ with equation $ax+by+c=0$, then $\Img{\Td}{L}$ is the straight line in $\Rr$ with equation $ax+cy+b=0$.
\end{prop}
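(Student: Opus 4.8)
The plan is to verify the claimed formula by direct substitution, then argue that image points genuinely lie on the asserted line and conversely. First I would take an arbitrary point $\mtdu{x}{y}\in L\cap\Rr$, so that $ax+by+c=0$ with $y\neq 0$, and let $\mtdu{x'}{y'}=\Td\mtdu{x}{y}=\mtdu{x/y}{1/y}$. Substituting $x'=x/y$ and $y'=1/y$ into $ax'+cy'+b$ gives $a(x/y)+c(1/y)+b=\tfrac{1}{y}\left(ax+c+by\right)=\tfrac{1}{y}\cdot 0=0$, using $y\neq 0$. Hence $\Img{\Td}{L}$ is contained in the line $L'$ with equation $ax+cy+b=0$, restricted to $\Rr$.

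For the reverse inclusion, the clean move is to invoke that $\Td$ is self-inverse (the previous proposition): applying the computation just done with the roles of $(a,b,c)$ replaced by $(a,c,b)$ shows $\Img{\Td}{L'}\subseteq L$, and composing with self-inverseness yields $L'=\Img{\Td}{\Img{\Td}{L'}}\subseteq \Img{\Td}{L}$, intersected with $\Rr$. So $\Img{\Td}{L}$ is exactly $L'\cap\Rr$, which is what ``maps straight lines to straight lines'' should mean within the domain $\Rr$.

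One small point I would flag: $L\cap\Rr$ is $L$ minus at most one point (its intersection with the $x$-axis $y=0$), and likewise $L'\cap\Rr$ is $L'$ minus at most one point; the transformation is only defined on $\Rr$, so the statement is really about these punctured lines, and I would either state it that way or note that the removed points are the ones ``at infinity'' under $\Td$. The main (and only) obstacle is bookkeeping the degenerate cases — lines through $O$, horizontal lines, and the line $y=0$ itself — but these are handled uniformly by the algebraic identity $ax'+cy'+b=\tfrac1y(ax+by+c)$, so no case analysis is actually needed beyond the remark about the missing point.
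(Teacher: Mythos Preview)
Your argument is correct and essentially identical to the paper's: both directions are handled by the same algebraic substitution $a(x/y)+c(1/y)+b=\tfrac{1}{y}(ax+by+c)$, with the reverse inclusion obtained via the self-inverseness of $\Td$. The paper just states the reverse step as ``similarly, $\Td^{-1}=\Td$ carries points of $L'$ to $L$,'' without the extra commentary on punctured lines; your added remarks are fine but not needed for the proof.
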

\begin{proof}
Let $L'$ be the straight line with equation $ax+cy+b=0$. If $\mtdu{x_0}{y_0}\in\Rr$ is in $L$, then $ax_0+by_0+c=0$; thus, as $a\frac{x_0}{y_0}+c\frac{1}{y_0}+b=0$, then $\Td\mtdu{x_0}{y_0}$ is in $L'$. Similarly, if $\mtdu{x_0}{y_0}\in\Rr$ is in $L'$, then $\Td^{-1}\mtdu{x_0}{y_0}=\Td\mtdu{x_0}{y_0}$ is in $L$.
\end{proof}

\begin{prop}
 $\Td$ preserves incidences between points and lines. More precisely, the point $p$ is on the straight line $L$ if and only if $\Td(p)$ is on the straight line $\Img{\Td}{L}$.
\end{prop}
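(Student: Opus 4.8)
The plan is to deduce the statement immediately from the two preceding propositions. Since $\Td$ is self-inverse, it is in particular a bijection of $\Rr$ onto itself; hence for any subset $D\subset\Rr$ and any point $p\in\Rr$ one has $p\in D$ if and only if $\Td(p)\in\Img{\Td}{D}$. Applying this with $D=L$, a straight line in $\Rr$, yields $p\in L \iff \Td(p)\in\Img{\Td}{L}$, and Proposition~\ref{prop:lines_to_lines:R2} identifies $\Img{\Td}{L}$ with precisely the straight line asserted in the statement. So the incidence-preservation is just the combination of bijectivity of $\Td$ with the explicit computation of $\Img{\Td}{L}$ already carried out.

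Alternatively, I would argue by direct substitution, mirroring the proof of Proposition~\ref{prop:lines_to_lines:R2}: writing $L$ with equation $ax+by+c=0$ and $p=\mtdu{x_0}{y_0}\in\Rr$, the point $p$ lies on $L$ iff $ax_0+by_0+c=0$; dividing by $y_0\neq 0$ this is equivalent to $a(x_0/y_0)+c(1/y_0)+b=0$, i.e.\ to $\Td(p)=\mtdu{x_0/y_0}{1/y_0}$ lying on the line with equation $ax+cy+b=0$, which is $\Img{\Td}{L}$. Both directions of the equivalence are obtained at once, since dividing by the nonzero quantity $y_0$ is a reversible step.

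The only point requiring any care is the domain bookkeeping: $L$ and $\Img{\Td}{L}$ must be read as the straight lines intersected with $\Rr$ (points with second coordinate zero are excluded), and $\Td$ is defined only on $\Rr$. Once this is kept in mind there is no real obstacle; the proposition is essentially a restatement of Proposition~\ref{prop:lines_to_lines:R2} together with the bijectivity of $\Td$, and it will be used subsequently as the basic dictionary translating configurations of points and lines under $\Td$.
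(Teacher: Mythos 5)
Your proposal is correct and, in its second (direct substitution) form, is essentially identical to the paper's own proof, which verifies that $ax_0+by_0+c=0$ holds if and only if $\Td(p)$ satisfies $ax+cy+b=0$. The first variant (bijectivity of $\Td$ plus Proposition~\ref{prop:lines_to_lines:R2}) is just a repackaging of the same computation, so no further comment is needed.
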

\begin{proof}
Let $p=(x_0,y_0)$ be a point in $\Rr$, and let $L: ax+by+c=0$ be a straight line in $\Rr$. This proof follows from the fact that $(x_0,y_0)$ satisfies $ax+by+c=0$ if and only if $\mtdu{ \frac{x_0}{y_0} }{\frac{1}{y_0}}$ satisfies $\Img{\Td}{L}: ax+cy+b=0$.
\end{proof}

Given a line $L$ in $\Rr$, we denote by $L^+$ the set of points in $L$ whose second coordinate is greater than zero, and we denote by $L^-$ the set of points in $L$ whose second coordinate is less than zero.

\begin{prop}\label{prop:T_preserves_order:r2}
 Let $L$ be a straight line in $\Rr$. Then $\Img{\Td}{L^+}=\Img{\Td}{L}^+$ and $\Img{\Td}{L^-}=\Img{\Td}{L}^-$. Moreover, if $p_1,p_2,\ldots,p_k$ are in $L^+$ (or they are in $L^-$) ordered by their distance to the $x$-axis from the closest to the furthest, then $\Td(p_1), \Td(p_2),\ldots,\Td(p_k)$ are in $\Img{\Td}{L^+}$ (or they are in $\Img{\Td}{L^-}$, respectively), ordered by their distance to the $x$-axis from the furthest to the closest.
\end{prop}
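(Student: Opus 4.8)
The statement has two parts. The plan is to handle the set-level claim $\Img{\Td}{L^+}=\Img{\Td}{L}^+$ first, and then the ordering claim. For the first part, the key observation is that $\Td$ acts on the second coordinate by $y\mapsto 1/y$, which is sign-preserving: if $y_0>0$ then $1/y_0>0$, and if $y_0<0$ then $1/y_0<0$. So a point $p\in L$ lies in $L^+$ (second coordinate positive) if and only if $\Td(p)$ has positive second coordinate; and by Proposition~\ref{prop:lines_to_lines:R2} we already know $\Td(p)\in\Img{\Td}{L}$. Hence $\Td$ carries $L^+$ into $\Img{\Td}{L}^+$, and since $\Td$ is self-inverse (and maps $\Img{\Td}{L}$ back to $L$), the same argument in reverse gives the opposite inclusion. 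The case of $L^-$ is identical with inequalities reversed.

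For the ordering claim, note that on $L^+$ the ``distance to the $x$-axis'' of a point $\mtdu{x_0}{y_0}$ is just $y_0$ (since $y_0>0$), and after applying $\Td$ the corresponding distance is $1/y_0$. Since $t\mapsto 1/t$ is a strictly decreasing function on $(0,\infty)$, if $p_1,\dots,p_k\in L^+$ have second coordinates $0<y_1<y_2<\cdots<y_k$ (closest to furthest), then their images have second coordinates $1/y_1>1/y_2>\cdots>1/y_k$, i.e. the images are ordered furthest to closest. The same works on $L^-$: there the distance to the $x$-axis is $-y_0=|y_0|$ with $y_0<0$, and $t\mapsto 1/t$ is strictly decreasing on $(-\infty,0)$ as well, so $y_1<y_2<\cdots<y_k<0$ with increasing distances $|y_1|>|y_2|>\cdots>|y_k|$ maps to $1/y_1<1/y_2<\cdots<1/y_k<0$ with distances $|1/y_1|<|1/y_2|<\cdots<|1/y_k|$, again reversing the order. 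One should also record the trivial point that $\Td$ is injective on $\Rr$, so distinct $p_i$ give distinct images and the ordering is well-defined.

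I do not expect any genuine obstacle here: both parts reduce to the elementary fact that $t\mapsto 1/t$ preserves sign and is strictly decreasing on each of the two sign-components of $\R\setminus\{0\}$. The only thing to be careful about is bookkeeping — making sure the ``closest to furthest'' versus ``furthest to closest'' directions are stated consistently on $L^-$, where the distance is $|y_0|$ rather than $y_0$ — but this is routine. A clean writeup would state the monotonicity fact once as a small observation and then apply it to both the $L^+$ and $L^-$ cases, invoking Proposition~\ref{prop:lines_to_lines:R2} for membership in the image line.
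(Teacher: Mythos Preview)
Your approach is essentially identical to the paper's: the set equality comes from the sign-preservation of $y\mapsto 1/y$ together with Proposition~\ref{prop:lines_to_lines:R2}, and the ordering claim is exactly the paper's observation that $|y_i|<|y_j|\iff|1/y_i|>|1/y_j|$. One small bookkeeping slip in your $L^-$ sketch: since $t\mapsto 1/t$ is decreasing on $(-\infty,0)$, $y_1<\cdots<y_k<0$ yields $1/y_1>\cdots>1/y_k$ (not the reverse), but as you anticipated this is routine and the argument is unaffected.
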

\begin{proof}
As $\Td$ maps straight lines to straight lines and does not change the sign of the second coordinate, then $\Img{\Td}{L^+}=\Img{\Td}{L}^+$ and $\Img{\Td}{L^-}=\Img{\Td}{L}^-$. If the second coordinates of $p_i$ and $p_j$ are $y_i$ and $y_j$, respectively, then the second coordinates of $\Td(p_i)$ and $\Td(p_j)$ are $1/y_i$ and $1/y_j$, respectively. This proof follows from the fact that $|y_i|<|y_j|$ if and only if $|1/y_i|>|1/y_j|$.
\end{proof}

Let $\Rr^+$ denote the set of points in $\Rr$ whose second coordinate is greater than zero, and let $\Rr^-$ denote the set of points in $\Rr$ whose second coordinate is less than zero.
The proofs of Proposition~\ref{prop:segment_to_somethig:R2} and  Proposition~\ref{prop:ray_to_somethig:R2} follows from Proposition~\ref{prop:T_preserves_order:r2}.

\begin{prop}\label{prop:segment_to_somethig:R2}
 Let $D$ be a line segment contained in a straight line $L$, whose endpoints are $p$ and $q$.
 \begin{itemize}
  \item If both $p$ and $q$ are in $\Rr^+$ ($\Rr^-$), then $\Img{\Td}{D}$ is the line segment contained in $\Rr^+$ ($\Rr^-$) whose endpoints are $\Td(p)$ and $\Td(q)$.
  \item If $p$ is on the $x$-axis and $q$ is in $\Rr^+$ ($\Rr^-$), then $\Img{\Td}{D}$ is the ray contained in $\Rr^+$ ($\Rr^-$) defined by the straight line $\Img{\Td}{L}$ and the point $\Td(q)$.
 \end{itemize}
\end{prop}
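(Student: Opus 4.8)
The plan is to lean on the fact that, restricted to $\Rr$, the map $\Td$ is a homeomorphism of $\Rr$ onto itself: each coordinate of $\Td$ is a rational function whose denominator does not vanish on $\Rr$, so $\Td$ is continuous there, and $\Td$ is its own inverse. By Proposition~\ref{prop:lines_to_lines:R2} it carries the line $L$ onto the line $\Img{\Td}{L}$, and since $\Td$ does not change the sign of the second coordinate, Proposition~\ref{prop:T_preserves_order:r2} says it carries $L^+$ onto $\Img{\Td}{L}^+$, carries $L^-$ onto $\Img{\Td}{L}^-$, and reverses the order of the points of $L^+$ (and of $L^-$) by their distance to the $x$-axis, i.e.\ by their second coordinate. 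I treat the $\Rr^+$ case; the $\Rr^-$ case is obtained by reflecting in the $x$-axis. Throughout, $\Img{\Td}{D}$ means $\Img{\Td}{D\cap\Rr}$, since $\Td$ is only defined on $\Rr$.

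\emph{First bullet.} Here $D$ is a compact segment contained in $L^+$, with $p=\mtdu{x_1}{y_1}$ and $q=\mtdu{x_2}{y_2}$, $y_1,y_2>0$. If $L$ is not horizontal, then $y_1\neq y_2$ and the second coordinate is a strictly monotone affine function along $L^+$, so $D$ is exactly the set of points of $L^+$ whose second coordinate lies between $y_1$ and $y_2$. By the order-reversal in Proposition~\ref{prop:T_preserves_order:r2}, $\Img{\Td}{D}$ is then exactly the set of points of $\Img{\Td}{L}^+\subseteq\Rr^+$ whose second coordinate lies between $1/y_1$ and $1/y_2$, namely the segment with endpoints $\Td(p)$ and $\Td(q)$. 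If $L$ is horizontal, then by Proposition~\ref{prop:lines_to_lines:R2} $\Img{\Td}{L}$ is also horizontal and lies in $\Rr^+$, and on it $\Td$ acts as $\mtdu{x}{c}\mapsto\mtdu{x/c}{1/c}$ with $c>0$ fixed, an order-preserving affine rescaling of the first coordinate, which again maps $D$ onto the segment with endpoints $\Td(p),\Td(q)$.

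\emph{Second bullet.} Now $p$ lies on the $x$-axis, so $p\notin\Rr$ and $\Img{\Td}{D}=\Img{\Td}{D\setminus\{p\}}$, where $D\setminus\{p\}$ is a half-open segment with unique finite endpoint $q=\mtdu{x_2}{y_2}$, $y_2>0$. Since $L$ contains the point $p$ on the $x$-axis and the point $q$ off it, $L$ is not horizontal, and the affine parametrization of $D$ shows that $D\setminus\{p\}$ is precisely the set of points of $L^+$ whose second coordinate lies in $(0,y_2]$. Applying the order-reversal of Proposition~\ref{prop:T_preserves_order:r2}, $\Img{\Td}{D}$ is precisely the set of points of $\Img{\Td}{L}^+\subseteq\Rr^+$ whose second coordinate lies in $[1/y_2,\infty)$; since $1/y_2$ is the second coordinate of $\Td(q)$, this is the ray of $\Img{\Td}{L}$ with apex $\Td(q)$ heading toward larger second coordinate. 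As $\Img{\Td}{L}$ is not horizontal (a horizontal $\Img{\Td}{L}$ would force $L$ horizontal by Proposition~\ref{prop:lines_to_lines:R2}, hence $L$ would be the $x$-axis, contradicting $q\in L^+$), exactly one of the two rays of $\Img{\Td}{L}$ at $\Td(q)$ lies in $\Rr^+$, and that is the ray we obtained; so $\Img{\Td}{D}$ is the ray defined by $\Img{\Td}{L}$ and $\Td(q)$ that is contained in $\Rr^+$.

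\emph{Main obstacle.} The only point that is not completely mechanical is the second bullet: one must see that the image is a full ray rather than a bounded half-open segment, which is exactly where the blow-up $1/y\to\infty$ as $y\to 0^+$ (equivalently, the order-reversal of Proposition~\ref{prop:T_preserves_order:r2}) is used, and one must identify which of the two rays at $\Td(q)$ occurs, for which the non-horizontality of $\Img{\Td}{L}$ is invoked. The small degenerate horizontal case in the first bullet is dispatched by Proposition~\ref{prop:lines_to_lines:R2}. A purely computational alternative is to parametrize $D$ affinely and substitute: writing $p=\mtdu{x_1}{y_1}$, $q=\mtdu{x_2}{y_2}$ (with $y_1=0$ in the second case), one checks that $\Td$ of $(1-t)p+tq$ equals $(1-s)\Td(p)+s\Td(q)$ with $s=ty_2/\big((1-t)y_1+ty_2\big)$, which runs over $[0,1]$ monotonically as $t$ does, in the first case, and equals $\Td(q)+\sigma\,\mtdu{x_1}{1}$ with $\sigma=(1-t)/(ty_2)$ running over $[0,\infty)$ in the second, exhibiting the ray explicitly.
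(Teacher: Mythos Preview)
Your argument is correct and follows the same route the paper indicates---the paper's proof is the single sentence that the result ``follows from Proposition~\ref{prop:T_preserves_order:r2}'', and you have simply filled in those details carefully. In fact you are more thorough than the paper: you isolate and dispatch the horizontal-line case (where Proposition~\ref{prop:T_preserves_order:r2} gives no information, since all points are equidistant from the $x$-axis), and your explicit reparametrization $s=ty_2/((1-t)y_1+ty_2)$ and $\sigma=(1-t)/(ty_2)$ is a clean independent verification.
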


Given $D\subset \Rr$, we denote by $\overline{D}$ the closure of $D$ in $\RR^2$.

\begin{prop}\label{prop:ray_to_somethig:R2}
 Let $D$ be a non-horizontal ray contained in a straight line $L$, whose endpoint is $p$.
 \begin{itemize}
  \item If both $p$ and $D$ are in $\Rr^+$ ($\Rr^-$), then $\Img{\Td}{D}$ is the line segment contained in $\Rr^+$ ($\Rr^-$) whose endpoints are $\Td(p)$ and the intersection of $\overline{\Img{\Td}{L}}$ with the $x$-axis.
  \item If $p$ is on the $x$-axis and $D$ is in $\Rr^+$ ($\Rr^-$), then $\Img{\Td}{D}$ is the ray defined by the part of the straight line $\Img{\Td}{L}$ in $\Rr^+$ ($\Rr^-$).
 \end{itemize}
\end{prop}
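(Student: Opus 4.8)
The plan is to deduce both items directly from Proposition~\ref{prop:T_preserves_order:r2} and Proposition~\ref{prop:lines_to_lines:R2}, after one elementary observation about the shape of $D$. First I would note that, since $D$ is a non-horizontal ray whose points other than possibly its endpoint lie in $\Rr^+$ (the $\Rr^-$ case is symmetric and I will not repeat it), the second coordinate varies affinely and non-constantly along $D$ while remaining non-negative; because $D$ is unbounded, this forces the second coordinate to increase monotonically to $+\infty$. Hence the distance to the $x$-axis is strictly monotone and unbounded along $D$, and when $p\in\Rr^+$ it attains its minimum exactly at $p$. I would also record that $D$ non-horizontal means the line $L\supset D$ has an equation $ax+by+c=0$ with $a\neq 0$, so by Proposition~\ref{prop:lines_to_lines:R2} the line $\Img{\Td}{L}$ has equation $ax+cy+b=0$ and is again non-horizontal; consequently $\overline{\Img{\Td}{L}}$ meets the $x$-axis in exactly one point.

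For the first item ($p\in\Rr^+$ and $D\subset\Rr^+$) I would argue that $D$ is a connected subset of $L^+$, so by Proposition~\ref{prop:T_preserves_order:r2} its image is a connected subset of $\Img{\Td}{L^+}=\Img{\Td}{L}^+$, that is, a subsegment of the line $\Img{\Td}{L}$ lying in $\Rr^+$. Since $\Td$ reverses the order by distance to the $x$-axis, $\Td(p)$ is the endpoint of this subsegment farthest from the $x$-axis, while moving along $D$ toward infinity (second coordinate $\to+\infty$) sends the image toward the $x$-axis; therefore the other end of the subsegment is the point where $\overline{\Img{\Td}{L}}$ meets the $x$-axis, which is the claimed segment. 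For the second item ($p$ on the $x$-axis, $D\setminus\{p\}\subset\Rr^+$), the set $D\setminus\{p\}=D\cap\Rr^+$ is a connected subset of $L^+$ along which the distance to the $x$-axis sweeps all of $(0,\infty)$; applying Proposition~\ref{prop:T_preserves_order:r2} again, $\Img{\Td}{D}=\Img{\Td}{D\setminus\{p\}}$ is the subset of $\Img{\Td}{L}^+$ whose distance to the $x$-axis ranges over all of $(0,\infty)$, which—because $\Img{\Td}{L}$ is non-horizontal—is precisely the ray formed by the part of $\Img{\Td}{L}$ in $\Rr^+$.

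I do not expect a real obstacle; the argument is essentially bookkeeping. The points that need care are: extending the finite-tuple ordering statement of Proposition~\ref{prop:T_preserves_order:r2} to the whole ray, which follows from the continuity of $\Td$ on $\Rr$ (or from re-running its one-line computation with a parameter); tracking which boundary points are actually attained—those on the $x$-axis are not, consistently with $\Td$ being defined only on $\Rr$, which is exactly why the closure $\overline{\Img{\Td}{L}}$ appears in the statement; and checking that the $\Rr^-$ sub-cases follow verbatim since $\Td$ preserves the sign of the second coordinate.
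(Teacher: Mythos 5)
Your argument is correct and is essentially the paper's intended proof: the paper simply states that Proposition~\ref{prop:ray_to_somethig:R2} follows from Proposition~\ref{prop:T_preserves_order:r2}, and your write-up fills in exactly that deduction (using Proposition~\ref{prop:lines_to_lines:R2} to see $\Img{\Td}{L}$ is non-horizontal, monotonicity of the second coordinate along the ray, and the order-reversal to identify the endpoints). Your attention to the unattained boundary point on the $x$-axis and to the $\Rr^-$ case is consistent with the statement, so no gap remains.
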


\begin{prop}\label{prop:horizontal_ray_to_somethig:R2}
 Let $D$ be a horizontal ray contained in a straight line $L$ whose endpoint is $p$. If $D$ is contained in $\Rr^+$ ($\Rr^-$), then $\Img{\Td}{D}$ is the horizontal ray in $\Rr^+$ ($\Rr^-$) defined by the straight line $\Img{\Td}{L}$ and the point $\Td(p)$. If $D$ is contained in $\Rr^+$, then $D$ and $\Img{\Td}{D}$ have the same direction; otherwise, $D$ and $\Img{\Td}{D}$ have opposite directions.
\end{prop}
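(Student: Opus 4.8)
The plan is to argue directly from the defining formula of $\Td$. Unlike Proposition~\ref{prop:segment_to_somethig:R2} and Proposition~\ref{prop:ray_to_somethig:R2}, this case cannot be reduced to Proposition~\ref{prop:T_preserves_order:r2}: every point of a horizontal ray has the same distance to the $x$-axis, so the order-preservation statement is vacuous there and gives no control over the image. A short self-contained computation is needed instead.

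First I would write $L$ as the horizontal line with equation $y=c$, where $c\neq 0$ since $D\subset\Rr$; note $c>0$ exactly when $D\subset\Rr^+$ and $c<0$ exactly when $D\subset\Rr^-$. Parametrize $D$ as $\set{\mtdu{t}{c}: t\geq x_0}$ (or $\set{\mtdu{t}{c}: t\leq x_0}$), where $p=\mtdu{x_0}{c}$. Applying the transformation gives $\Td\mtdu{t}{c}=\mtdu{t/c}{1/c}$, so every image point has second coordinate $1/c$; hence $\Img{\Td}{D}$ is contained in the horizontal line $y=1/c$, which is precisely $\Img{\Td}{L}$ by Proposition~\ref{prop:lines_to_lines:R2} (writing $L$ as $0\cdot x+1\cdot y+(-c)=0$, its image has equation $0\cdot x+(-c)y+1=0$). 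Since $\Td$ does not change the sign of the second coordinate, $\Img{\Td}{D}\subset\Rr^+$ iff $c>0$ iff $D\subset\Rr^+$, and likewise for $\Rr^-$.

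Next, to identify $\Img{\Td}{D}$ as a ray with endpoint $\Td(p)$, I would observe that $t\mapsto t/c$ is an affine bijection of $\RR$, hence maps the half-line of parameters defining $D$ onto a half-line of first coordinates; therefore $\Img{\Td}{D}$ is the horizontal ray on $\Img{\Td}{L}$ whose endpoint has first coordinate $x_0/c$, i.e.\ the endpoint is $\mtdu{x_0/c}{1/c}=\Td(p)$. For the direction, the map $t\mapsto t/c$ is increasing when $c>0$ and decreasing when $c<0$; since the horizontal direction of $D$ (and of its image) is read off from the orientation of the parameter half-line, this shows $D$ and $\Img{\Td}{D}$ point the same way when $D\subset\Rr^+$ and opposite ways when $D\subset\Rr^-$.

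I do not expect any genuine obstacle here; the proof is a one-line substitution. The only points requiring care are (i) not invoking Proposition~\ref{prop:T_preserves_order:r2}, whose hypothesis of pairwise distinct distances to the $x$-axis fails on a horizontal ray, and (ii) tracking the sign of $c$ throughout, since it controls both which of $\Rr^+,\Rr^-$ the image lies in and whether the direction is preserved or reversed.
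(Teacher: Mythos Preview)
Your proof is correct and follows essentially the same approach as the paper: a direct computation from the formula $\Td(x,y)=(x/y,1/y)$, identifying $\Img{\Td}{L}$ as a horizontal line and reading off the direction from the sign of the second coordinate. Your version is more explicit---parametrizing the ray and noting that $t\mapsto t/c$ is an affine bijection whose monotonicity depends on the sign of $c$---whereas the paper simply asserts that $\Img{\Td}{D}$ is a ray and appeals to $\Td(x_0,y_0)=(x_0/y_0,1/y_0)$ for the direction; but the underlying argument is the same.
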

\begin{proof}
If $L$ has equation $by+c=0$, then $\Img{\Td}{L}$ is the horizontal line with equation $cy+b=0$. Thus, if $D$ is contained in $\Rr^+$ ($\Rr^-$), then $\Img{\Td}{D}$ is also contained in $\Rr^+$ ($\Rr^-$); moreover, $\Img{\Td}{D}$ is the horizontal ray defined by the straight line $\Img{\Td}{L}$ and the point $\Td(p)$. As $\Td\mtdu{x_0}{y_0}= \mtdu{\frac{x_0}{y_0}}{\frac{1}{y_0}}$, the direction of $D$ only changes if $D$ is contained in $\Rr^-$.
\end{proof}

From Proposition~\ref{prop:lines_to_lines:R2},  Proposition~\ref{prop:segment_to_somethig:R2},
Proposition~\ref{prop:ray_to_somethig:R2} and Proposition~\ref{prop:horizontal_ray_to_somethig:R2},
we conclude that:
If $D$ is a straight line, ray or segment contained in $\Rr^+$ ($\Rr^-$) then $\Img{\Td}{D}$ is a straight line, ray or segment contained in $\Rr^+$ ($\Rr^-$).

\begin{prop}\label{prop:lines_origin_to_vertical:r2}
 Let $L$ be a straight line in $\Rr$. Then $O\in \overline{L}$ if and only if $\Img{\Td}{L}$ is a vertical line.
\end{prop}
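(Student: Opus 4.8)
The plan is to reduce everything to the explicit coordinate description of $\Img{\Td}{L}$ provided by Proposition~\ref{prop:lines_to_lines:R2}, combined with a careful reading of what ``$L$ is a straight line in $\Rr$'' and ``$O\in\overline{L}$'' actually mean. I would begin by writing $L$ as (the trace in $\Rr$ of) the ordinary line with equation $ax+by+c=0$, where $(a,b)\neq(0,0)$. A key preliminary remark, which I will use twice, is that $L$ is not the $x$-axis (the $x$-axis is disjoint from $\Rr$), so one cannot have $a=0$ and $c=0$ simultaneously; equivalently, $c=0$ forces $a\neq 0$.

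The heart of the argument is the equivalence $O\in\overline{L}\iff c=0$. For the ``if'' direction, when $c=0$ the ordinary line passes through $O$ and, not being the $x$-axis, it is non-horizontal; hence $O$ is a limit of points of $L$ (approach $O$ along the line — its second coordinate vanishes only at $O$ itself, since $a\neq 0$), so $O\in\overline{L}$. For the ``only if'' direction I would argue the contrapositive: if $c\neq 0$ then $O$ does not lie on the ordinary line $ax+by+c=0$, and since this line is closed in $\RR^2$ and contains $\overline{L}$, we get $O\notin\overline{L}$. Then I invoke Proposition~\ref{prop:lines_to_lines:R2}: $\Img{\Td}{L}$ is the line with equation $ax+cy+b=0$, which is vertical exactly when the coefficient of $y$ vanishes and that of $x$ does not, i.e. $c=0$ and $a\neq 0$; by the preliminary remark ``$c=0$'' already entails ``$a\neq 0$'', so $\Img{\Td}{L}$ is vertical iff $c=0$. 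Chaining the two equivalences yields $O\in\overline{L}\iff c=0\iff\Img{\Td}{L}$ is vertical.

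The only genuinely delicate point — and the step I would be most careful about — is the bookkeeping around the deleted $x$-axis: checking that ``$L$ a line in $\Rr$'' excludes the degenerate case $a=c=0$, and that the closure $\overline{L}$ is taken in $\RR^2$ so that a non-horizontal line through the origin really does recover $O$ as a limit point. Once these are pinned down, the statement follows immediately from Proposition~\ref{prop:lines_to_lines:R2}.
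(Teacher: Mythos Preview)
Your proposal is correct and follows essentially the same approach as the paper: both argue via Proposition~\ref{prop:lines_to_lines:R2} that $O\in\overline L$ iff the constant term $c$ of $ax+by+c=0$ vanishes iff the image line $ax+cy+b=0$ is vertical. The paper's proof is a one-line remark to this effect, while you have (usefully) made explicit the bookkeeping around the excluded $x$-axis and the meaning of the closure.
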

\begin{proof}
This proof follows from the fact that $\overline{L}$ has equation $ax+by=0$ if and only if $\Img{\Td}{L}$ has equation $ax+b=0$.
\end{proof}

Let $\visp{k}{O}{\Ar}{+}$ denote the portions of the elements of $\visp{k}{O}{\Ar}{}$ in $\Rr^+$, i.e.,
\[\visp{k}{O}{\Ar}{+}= \set{D\cap \Rr^+: D\in \visp{k}{O}{\Ar}{} }\]
Similarly, let $\visp{k}{O}{\Ar}{-}$ denote the portions of $\visp{k}{O}{\Ar}{}$ in $\Rr^-$, i.e.,
\[\visp{k}{O}{\Ar}{-}= \set{D\cap \Rr^-: D\in \visp{k}{O}{\Ar}{} }\]
Let $\levp{k}{\Ar}{+}$ denote the portion of the elements of $\levp{k}{\Ar}{}$ in $\Rr^+$, i.e.,
\[ \levp{k}{\Ar}{+} = \set{D\cap \Rr^+: D \in \levp{k}{\Ar}{} } \]
The $(\leq k)$-lower-level-region of $\Ar$ is the set of points in $\RR^2$ ($\RR^3$) with at most $k$ elements of $\Ar$ lying below it.
Let $\levp{k}{\Ar}{-}$ denote the portion of the elements of $\Ar$ in both $\Rr^-$ and the $(\leq k)$-lower-level-region of $\Ar$.

\begin{lem}\label{lem:lelvels_visibility_paths:R2}
 Let $\Ar$ be a set of straight lines, segments, or rays. Then:
 \begin{enumerate}
  \item \label{lem:levels_visibility_regions:R2:1}
  $\visp{k}{O}{\Ar}{+} = \Img{\Td}{ \levp{k}{\Img{\Td}{\Ar}}{+} }$.
  \item \label{lem:levels_visibility_regions:R2:2}
  $\visp{k}{O}{\Ar}{-} = \Img{\Td}{ \levp{k}{\Img{\Td}{\Ar}}{-} }$.
 \end{enumerate}
\end{lem}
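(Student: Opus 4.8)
The plan is to prove part~\ref{lem:levels_visibility_regions:R2:1}; part~\ref{lem:levels_visibility_regions:R2:2} will follow by a symmetric argument. The key is to translate membership in the $k$-crossing visibility region (a statement about how many elements of $\Ar$ a segment $Oq$ crosses) into membership in the $(\leq k)$-level-region (a statement about how many elements of $\Img{\Td}{\Ar}$ lie above a point). Since $\Td$ is self-inverse, it suffices to show: a point $q\in\Rr^+$ lies on an element $D\in\visp{k}{O}{\Ar}{}$ if and only if $\Td(q)$ lies on the corresponding element $\Img{\Td}{D}$ and $\Td(q)$ is in the $(\leq k)$-level-region of $\Img{\Td}{\Ar}$. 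Incidence preservation (already proved) handles the ``lies on'' part, so the whole content is the counting equivalence.

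The main step is the following claim: for $q\in\Rr^+$, the open segment $Oq$ intersects an element $D'\in\Ar$ if and only if the point $\Td(D')$... more precisely, if and only if $\Img{\Td}{D'}$ passes strictly above $\Td(q)$. To see this, note that a line through the origin $O$ corresponds under $\Td$ to a vertical line (Proposition~\ref{prop:lines_origin_to_vertical:r2}), so the ray from $O$ through $q$ maps to the vertical ray through $\Td(q)$. A point $q'$ on segment $Oq$ with $q'\in\Rr^+$ is farther from the $x$-axis than... no — $q'$ between $O$ and $q$ has second coordinate between $0$ and $y_q$, hence \emph{closer} to the $x$-axis than $q$, so by Proposition~\ref{prop:T_preserves_order:r2} its image $\Td(q')$ is \emph{farther} from the $x$-axis, i.e. lies on the vertical line through $\Td(q)$ but \emph{above} it. Thus ``$D'$ crosses the open segment $Oq$'' translates exactly to ``$\Img{\Td}{D'}$ meets the open vertical ray above $\Td(q)$'', i.e. ``$\Img{\Td}{D'}$ passes above $\Td(q)$''. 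Summing over all $D'\in\Ar$: the number of elements of $\Ar$ crossed by $Oq$ equals the number of elements of $\Img{\Td}{\Ar}$ lying above $\Td(q)$. Hence $q$ is $k$-crossing visible from $O$ iff $\Td(q)$ is in the $(\leq k)$-level-region of $\Img{\Td}{\Ar}$, which combined with incidence preservation and the fact that $\Td$ maps $\Rr^+$ to $\Rr^+$ (noted after Proposition~\ref{prop:horizontal_ray_to_somethig:R2}) gives exactly $\visp{k}{O}{\Ar}{+} = \Img{\Td}{\levp{k}{\Img{\Td}{\Ar}}{+}}$.

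The main obstacle I anticipate is bookkeeping at the boundary cases: elements of $\Ar$ that pass through $O$ itself, elements that meet the segment $Oq$ only at an endpoint or only at the $x$-axis, and the distinction between the interior of $pq$ (used in the definition of $k$-crossing visibility) versus the closed segment. One must be careful that ``above'' in the level-region is the correct strict/non-strict version to match ``interior of $Oq$ intersects''; the clean correspondence of open segment $\leftrightarrow$ open vertical ray is what makes this work, but it should be checked that no element contributes a spurious $\pm 1$ to the count. A secondary point is that $\Td(q)$ for $q\in\Rr^+$ may fail to have an element of $\Img{\Td}{\Ar}$ directly above or below it when the corresponding element of $\Ar$ is a segment or ray whose image (by Propositions~\ref{prop:segment_to_somethig:R2}--\ref{prop:horizontal_ray_to_somethig:R2}) does not span the vertical line through $\Td(q)$ — but this is precisely the case where $Oq$ does not cross that element, so the count still matches. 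Once these cases are dispatched, the equivalence and hence both parts of the lemma follow.
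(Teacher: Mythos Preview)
Your proposal is correct and follows essentially the same approach as the paper: reduce to part~\ref{lem:levels_visibility_regions:R2:1}, take a point $q\in\Rr^+$ on some element of $\Ar$, use Proposition~\ref{prop:lines_origin_to_vertical:r2} to turn the line $Oq$ into a vertical line, and use the order-reversal of Proposition~\ref{prop:T_preserves_order:r2} to identify crossings of the open segment $Oq$ with elements of $\Img{\Td}{\Ar}$ lying above $\Td(q)$. The paper's proof is much terser and does not spell out the boundary cases you worry about, so your more careful bookkeeping is if anything an improvement rather than a deviation.
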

\begin{proof}
 We prove \ref{lem:levels_visibility_regions:R2:1}; the proof of \ref{lem:levels_visibility_regions:R2:2} is similar.

Let $p\in\Rr^+$ be such that $p\in D$ for some $D\in \Ar$, and let $L$ be the line that contains $p$ and $O$. Then $p\in L^+$, $\Img{\Td}{D} \in \Img{\Td}{\Ar}$, and $\Td(p) \in \Img{\Td}{D}$. As $\Td$ preserves incidences, by Proposition~\ref{prop:lines_origin_to_vertical:r2} and Proposition~\ref{prop:T_preserves_order:r2}, the line segment between $O$ and $p$ crosses at most $k$ elements of $\Ar$ if and only if there are at most $k$ elements of $\Img{\Td}{\Ar}$ lying above $\Td(p)$.
\end{proof}

\subsection{Proofs of results in $\RR^2$}

\begin{thm}[Everett et al. \cite{lines_levels}]\label{thm:lines_levels}
Let $\Ar$ be a set of $n$ lines in the plane. Then $\levp{k}{\Ar}{}$ can be computed in $O(n \log n + kn)$ time.
\end{thm}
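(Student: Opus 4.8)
The plan is to reduce the statement to two facts: a tight combinatorial bound on $\levp{k}{\Ar}{}$, and a sweep-based construction that meets it. First I would identify the object precisely. A point of a line $\ell\in\Ar$ lies in the $(\leq k)$-level-region exactly when $\ell$ is one of the $k+1$ topmost lines of $\Ar$ at that abscissa, so $\levp{k}{\Ar}{}$ is exactly the union of the edges of the arrangement of $\Ar$ of level at most $k$ (equivalently, the portion of the lines of $\Ar$ lying on or above the $k$-level). By the Clarkson--Shor random-sampling bound (or the exact estimate of Alon and Gy\H{o}ri), the number of vertices of an arrangement of $n$ lines at level at most $k$ is $O(n(k+1))$; hence $\levp{k}{\Ar}{}$ has $O(kn+n)$ edges and vertices, and a running time of $O(n\log n+kn)$ is the natural target, the $n\log n$ term being unavoidable already for $k=0$ (the upper envelope).

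For the construction I would sort the lines by slope in $O(n\log n)$ time, giving their top-to-bottom order at $x=-\infty$, and then sweep a vertical line rightward while maintaining, in a balanced search tree, the $k+2$ currently topmost lines together with the abscissas at which consecutive pairs among them cross. Each event is the crossing of two lines consecutive within this top group; such a crossing is a vertex of the arrangement of level at most $k+1$, so by the bound above there are only $O(kn)$ of them. Processing an event amounts to swapping the two lines, recomputing $O(1)$ crossing certificates, emitting an edge of $\levp{k}{\Ar}{}$ when the swap occurs among the top $k+1$, and, when it occurs at ranks $k+1$ and $k+2$, updating which line enters or leaves the maintained group. With a priority queue of size $O(k)$ this yields $O(n\log n+kn\log k)$ time, and correctness is immediate from the characterization of $\levp{k}{\Ar}{}$ above.

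The main obstacle is removing the spurious $\log k$ factor to reach $O(n\log n+kn)$, and this is exactly the content of \cite{lines_levels}. A natural route is to replace the global priority queue by divide-and-conquer: split $\Ar$ into two halves, recursively obtain their $(\leq k)$-levels, overlay their vertical slab decompositions, and maintain the merged sorted list of topmost lines across the $O(kn)$ resulting slabs. The key observation making each slab transition cost $O(1)$ is that when one half's list undergoes an adjacent transposition at the crossing of two of its lines, no line of the other half can lie strictly between them at that abscissa (that would force a triple point), so the merged list also changes by a single adjacent swap, possibly pushing one line out of the top $k+1$ and pulling one in. The delicate point---where the real work lies---is retaining just enough lines ranked below $k+1$ so that the invariant can be restored after a transposition at the bottom of a maintained group, without letting the recursion accumulate more than an $O(\log n)$ additive slack in $k$; once this is arranged, summing the linear-time merges over the $O(\log n)$ recursion levels gives the claimed $O(n\log n+kn)$ bound. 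Alternatively, one simply invokes \cite{lines_levels} directly.
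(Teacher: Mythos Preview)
The paper does not prove this theorem at all: it is stated with attribution to Everett et al.\ and then used as a black box in the proof of Theorem~\ref{thm:lines_visibility}. Your closing sentence, ``one simply invokes \cite{lines_levels} directly,'' is precisely the paper's entire treatment of this statement.

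Your sketch of the underlying algorithm is a reasonable outline of how such a bound is obtained, and it is useful context, but it is extra relative to what the paper requires. One small quibble: your justification that ``no line of the other half can lie strictly between them at that abscissa (that would force a triple point)'' is phrased misleadingly. The actual reason is simpler: at the abscissa where two lines $a,b$ cross, they have the same $y$-value, so \emph{nothing} can lie strictly between them there; hence $a$ and $b$ are adjacent in the merged vertical order at that moment, and the merged list undergoes a single adjacent transposition. General position (no three concurrent lines) is what guarantees that events from the two halves never coincide, not that a third line cannot sit between $a$ and $b$ at the crossing.
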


We use Theorem~\ref{thm:lines_levels} to prove Theorem~\ref{thm:lines_visibility}.

\begin{proof}[Proof of Theorem~\ref{thm:lines_visibility}]
Without loss of generality, we may assume that $p$ is at the origin; otherwise, $p$ and the elements of $\Ar$ can be translated. We also may assume that the $x$-axis does not contain an element of $\Ar$ or an intersection between two elements of $\Ar$, otherwise, the elements of $\Ar$ can be rotated.

Note that  $\visp{k}{p}{\Ar}{}$ can be obtained by joining the segments in $\visp{k}{O}{\Ar}{+}$ that have one endpoint on the $x$-axis with the segments in $\visp{k}{O}{\Ar}{-}$ that have the same endpoint on the $x$-axis; the matching can be achieved by marking the lines that contain segments with endpoints on the $x$-axis. Thus,
by Proposition~\ref{prop:lines_to_lines:R2}, $\Img{\Td}{\Ar}$ is a set of $n$ straight lines, and this proof follows from Lemma~\ref{lem:lelvels_visibility_paths:R2} and Theorem~\ref{thm:lines_levels}.
\end{proof}

Let $\Ar$ be a set of straight lines, rays, and segments. The vertical decomposition (also known as trapezoidal decomposition) of $\Ar$ is obtained by erecting vertical segments upwards and downwards from each vertex in $\Ar$ and extending them until they meet another line or all the way to infinity.

\begin{lem}\label{lem:visibility_from_vertical_decomposition:R2}
Let $\Ar$ be a set of $n$ straight lines, rays, and segments. Then $\levp{k}{\Ar}{}$ can be obtained from a vertical decomposition of $\Ar$ in $O(kn)$ time.
\end{lem}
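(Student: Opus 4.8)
The plan is to show that, given the vertical decomposition (trapezoidal decomposition) of $\Ar$, one can compute $\levp{k}{\Ar}{}$ by walking along the lines of $\Ar$ and tracking, for each maximal edge piece, how many elements of $\Ar$ lie above it; the walk should visit each trapezoid and each edge a number of times proportional to $k$, yielding the $O(kn)$ bound. First I would observe that the combinatorial complexity of the vertical decomposition of $n$ lines, rays, and segments is $O(n)$ once we restrict to a bounding box (or handle unbounded faces with sentinels at infinity), so the decomposition has $O(n)$ faces, edges, and vertices and can be assumed given as a doubly-connected edge list (DCEL). The key structural fact is that the $(\le k)$-level-region is a union of faces of the vertical decomposition: the level (number of elements of $\Ar$ above a point) is constant within each trapezoid and changes by exactly $\pm 1$ when crossing a single edge of $\Ar$ (here I would use that the $x$-axis, or more precisely the ambient configuration, is generic, so no two elements of $\Ar$ pass through a common point and no vertical wall coincides with a vertex — this genericity was already invoked in the proof of Theorem~\ref{thm:lines_visibility}).

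The algorithm I have in mind proceeds in two phases. In the first phase I would compute the level of one reference trapezoid — say the topmost unbounded one, which has level $0$ — and then propagate levels across adjacent trapezoids via a graph traversal (BFS/DFS) of the dual graph of the vertical decomposition, incrementing or decrementing the level by one each time the shared boundary is a piece of a line of $\Ar$ (as opposed to a vertical wall, across which the level is unchanged). Since the dual graph has $O(n)$ edges, this phase costs $O(n)$ time, but it labels \emph{all} $O(n)$ trapezoids, not just those of level $\le k$; to stay within $O(kn)$ this is fine, as $O(n) \subseteq O(kn)$ when $k \ge 1$, and the $k=0$ case is the lower/upper envelope which is trivially $O(n)$ from the decomposition. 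In the second phase I would extract $\levp{k}{\Ar}{}$: for each element $D \in \Ar$, its portion inside the $(\le k)$-level-region is the union of those sub-edges of $D$ (in the DCEL, each line of $\Ar$ is split by the vertical walls into $O(n_D)$ pieces, where $\sum_D n_D = O(n)$) that bound a trapezoid of level $\le k$ from below or lie on its boundary; concatenating consecutive qualifying sub-edges along each line gives the desired output as a collection of maximal segments/rays, in time linear in the number of sub-edges, i.e. $O(n)$ overall.

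Re-examining the claimed bound, the honest statement is that the whole construction is actually $O(n)$ once the vertical decomposition is in hand, and the $O(kn)$ in the lemma is simply a convenient (non-tight) bound matching the output size guarantees used later — so I should be careful to present the algorithm at whatever granularity the paper wants; if the intended algorithm instead only explores trapezoids reachable from level $0$ by crossing at most $k$ lines (a bounded-depth BFS), then the traversal touches $O(kn)$ trapezoid-incidences in the worst case and the $O(kn)$ bound is the natural one. I expect the main obstacle to be the bookkeeping at the boundary between $\Rr^+$ and $\Rr^-$ and at infinity: one must make sure that rays and segments with endpoints on the $x$-axis, and unbounded trapezoids, are handled so that "level" is well-defined and the output pieces are glued correctly — exactly the matching-by-marked-lines issue that appears in the proof of Theorem~\ref{thm:lines_visibility}. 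Once that is dispatched by adding a large bounding box and sentinel lines, the level-propagation argument goes through cleanly, and the lemma follows.
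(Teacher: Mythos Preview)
Your proposal is correct but takes a different route from the paper. The paper's proof is essentially one sentence: for each vertex of the vertical decomposition, walk a vertical segment upward through the trapezoids, stopping after it crosses $k+1$ elements of $\Ar$ or reaches infinity; the vertex lies in $\levp{k}{\Ar}{}$ exactly when the segment escapes to infinity. Since each upward walk touches at most $k+1$ trapezoids and (in the intended application) the decomposition has $O(n)$ vertices, the total cost is $O(kn)$. You instead propagate levels to \emph{all} trapezoids by a BFS/DFS on the dual graph and then filter; this is more machinery but, as you yourself note, actually yields $O(n)$ rather than $O(kn)$. The paper's per-vertex ray-shooting is simpler to state and matches the claimed bound directly; your global propagation is tighter but requires the DCEL dual-graph bookkeeping.

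One caveat worth flagging: your assertion that ``the combinatorial complexity of the vertical decomposition of $n$ lines, rays, and segments is $O(n)$ once we restrict to a bounding box'' is false as stated --- $n$ lines in general position have $\Theta(n^2)$ intersection points, and a bounding box does nothing to change that. The $O(n)$ bound holds only because, in the paper's sole use of this lemma (the reproof of Theorem~\ref{thm:polygon_visibility}), the input is the image under $\Td$ of the edges of a simple polygon, hence a family of pairwise non-crossing segments and rays whose vertical decomposition has linear size. The paper's terse proof implicitly relies on the same assumption, so this is really a looseness in the lemma's hypotheses rather than a gap specific to your argument; still, you should make that hypothesis explicit rather than justify it with the bounding-box remark, which is incorrect.
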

\begin{proof}
Suppose that the vertical decomposition of $\Ar$ is known. Then for each vertex, extend a vertical segment upwards until it reaches $k+1$ elements of $\Ar$ or its way to infinity; such a vertex is in $\levp{k}{\Ar}{}$ if and only if the vertical segment reaches its way to infinity.
\end{proof}

\begin{proof}[Another proof of Theorem~\ref{thm:polygon_visibility}]
As in the proof of Theorem~\ref{thm:lines_visibility}, we assume that $p$ is at the origin and that the $x$-axis does not contain edges or vertices of $P$. By Proposition~\ref{prop:segment_to_somethig:R2}, $\Img{\Td}{P}$ consists of at most $2n$ line segments or rays. Thus, since the $k$-crossing visibility of $O$ on $P$ can be obtained from $\visp{k}{O}{P}{+}$ and $\visp{k}{O}{P}{-}$, by Lemma~\ref{lem:lelvels_visibility_paths:R2} and Lemma~\ref{lem:visibility_from_vertical_decomposition:R2}, it is sufficient to obtain the vertical decomposition of $\Img{\Td}{P}\cap \Rr^+$ and $\Img{\Td}{P}\cap \Rr^-$ in linear time; we do this for $\Img{\Td}{P}\cap \Rr^+$, the other case is similar.

\begin{figure}[htb]
  \centering
  \begin{subfigure}{0.4\textwidth}
      \includegraphics[width=\textwidth]{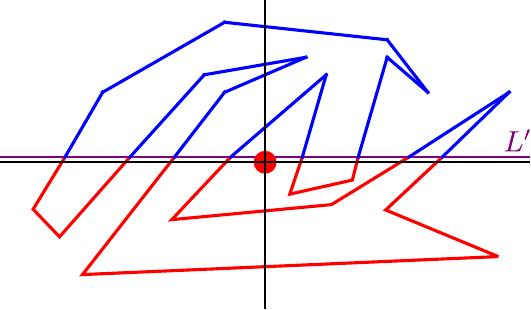}
  \end{subfigure}
  \hspace{2 em}
  \begin{subfigure}{0.4\textwidth}
      \includegraphics[width=\textwidth]{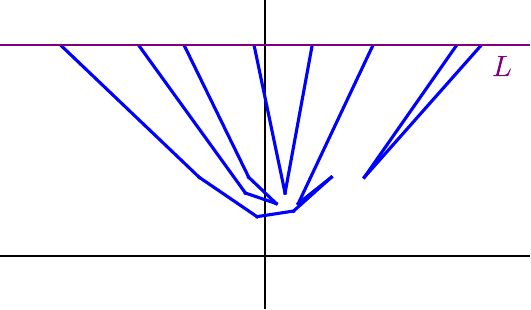}
  \end{subfigure}
  \caption{Illustration of a polygon partition in proof of Theorem~\ref{thm:polygon_visibility}.}
  \label{fig:polygon_partition}
\end{figure}
Let $L: y+c=0$ be a horizontal line, high enough so that all the vertices of $\Img{\Td}{P} \cap \Rr^+$ are below $L$, and
\begin{equation}\label{eq:BoundSlopeBlues}
 \max\set{|x/y|:\ (x,y)\in P } < -c\min\set{|x|:\ (x,y)\in P}.
\end{equation}
Let $L'=\Img{\Td}{L}$, which is a horizontal line with equation $cy+1=0$. Suppose that the points in $P$ above $L'$ are blue and the others are red, see Figure~\ref{fig:polygon_partition}.
Similarly, we define the color of $\Td(q)$ as the color of $q$ for each $q\in P\cap \Rr$. As $L$ is the horizontal line with high $-c$,  $L'$ is the horizontal line with high $1/(-c)$. As there are no vertices of $\Img{\Td}{P} \cap \Rr^+$ above $L$, there are no vertices of $P$ between the $x$-axis and $L'$.
Note that a point $\Td(P)$ is blue if and only if it is between the $x$-axis and $L$. Thus, to obtain a vertical decomposition of $\Img{\Td}{P}\cap \Rr^+$ it is enough to obtain a vertical decomposition of the blue part of $\Img{\Td}{P}$.

\begin{figure}[htb]
  \centering
  \begin{subfigure}{0.4\textwidth}
      \includegraphics[width=\textwidth]{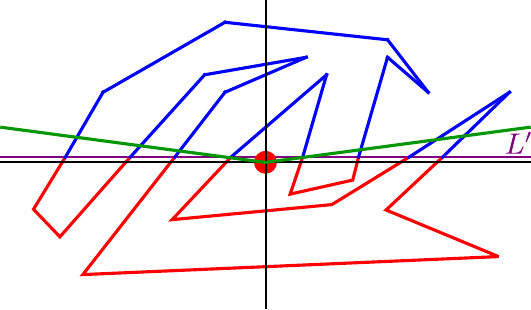}
  \end{subfigure}
  \hspace{2 em}
  \begin{subfigure}{0.4\textwidth}
      \includegraphics[width=\textwidth]{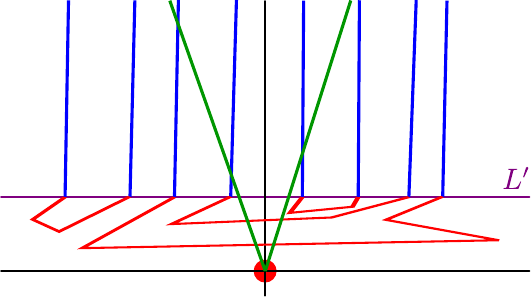}
  \end{subfigure}
  \caption{Illustration for proof of Theorem~\ref{thm:polygon_visibility}.}
  \label{fig:polygon_contraction}
\end{figure}

Let $P'$ be the polygon in $\Rr^+$ obtained from $P$ by vertically scaling its red part while keeping the points on $L'$ fixed (see Figure~\ref{fig:polygon_contraction}).
In \cite{chazelle1991triangulating}, Chazelle proves that the vertical decomposition of a polygon can be computed in linear time (see also Amato et al. \cite{amato2001randomized}). Thus, as $P'$ is contained in $\Rr^+$, by Proposition~\ref{prop:segment_to_somethig:R2}, $\Img{\Td}{P'}$ is a polygon, and the vertical decomposition of $\Img{\Td}{P'}$ can be computed in linear time.

Let \[r=\frac{\min\{|x|:\ (x,y)\in P \text{ and $(x,y)$ is red}\}}{1/(-c)} ,\]
and let $\ell_{left}$ and $\ell_{right}$ be the vertical lines with $x$-coordinates $-r$ and $r$, respectively. We claim that a vertex of $\Img{\Td}{P'}$ is blue if and only if it lies between the lines $\ell_{left}$ and $\ell_{right}$. Specifically, the $x$-coordinate of $\Td(x,y)$ is $x/y$, so any blue point $(x_0,y_0)\in P'$ satisfies
\[ \left|\frac{x_0}{y_0}\right| < \frac{\min\{|x|:\ (x,y)\in P\}}{1/(-c)} \]
and thus lies between $\ell_{left}$ and $\ell_{right}$. In the other case, any red vertex $(x_0,y_0)$ of $P'$ satisfies
\[ |x_0|\geq \min\{|x|:\ (x,y)\in P \text{ and } (x,y) \text{ is red}\} \quad \text{and} \quad 0<y_0<1/(-c), \]
and therefore is not between $\ell_{left}$ and $\ell_{right}$.
Thus, the vertical decomposition of the blue part of $\Img{\Td}{P'}$ can be obtained from the vertical decomposition of $\Img{\Td}{P'}$ below $L$ and between the lines $\ell_{left}$ and $\ell_{right}$.

In order to provide some intuition of the meaning of the lines $\ell_{left}$ and $\ell_{right}$, in Figure~\ref{fig:polygon_contraction} we illustrate $\Img{\Td}{\ell_{left}}$ and $\Img{\Td}{\ell_{right}}$ in green. Note that the blue vertices of $P'$ are above $\Img{\Td}{\ell_{left}}$ and $\Img{\Td}{\ell_{right}}$, and any red vertex of $P'$ is below $\Img{\Td}{\ell_{left}}$ or below $\Img{\Td}{\ell_{right}}$. Finally, note that any point in $\Rr^+$ is to the right of $\ell_{left}$ if and only if it is above $\Img{\Td}{\ell_{left}}$, and it is to the left of $\ell_{right}$ if and only if it is above $\Img{\Td}{\ell_{right}}$.
\end{proof}

\begin{proof}[Proof of Proposition~\ref{prop:complexity_lines}]
In \cite{ALON1986154}, Alon et al. prove that the maximum combinatorial complexity of the arrangement defined by $(\leq k)$-level on sets of $n$ straight lines in the plane is $\Theta(nk)$. Note that the set $\Ar$, which reaches this bound, can be considered with all the points of intersection in $\Rr^+$.
Thus, the combinatorial complexity of the arrangement defined by $\levp{k}{\Ar}{+}$ is also $\Theta(nk)$, and by Lemma~\ref{lem:lelvels_visibility_paths:R2}, the combinatorial complexity of the arrangement defined by $\visp{k}{O}{\Img{\Td}{\Ar}}{+}$ is $\Theta(nk)$.
\end{proof}

\section{Results in $\RR^3$} \label{sec:R3}

Let $\Rr$ be the set of points $\mttu{x}{y}{z}\in \RR^3$ such that $z\neq 0$. Throughout this section, $O$ denotes the point $\mttu{0}{0}{0}\in \RR^3$ and $\Td$ denotes the transformation $\Td:\Rr\rightarrow \Rr$ defined by
\[
\Td\left(\mttu{x}{y}{z}\right)=\mttu{x/z}{y/z}{1/z}.
\]
The proofs of Proposition~\ref{prop:self_inverse:R3}, Proposition~\ref{prop:planes_to_planes:R3}, Proposition~\ref{prop:incidences:R3}, Proposition~\ref{prop:T_preserves_order:r3}, Proposition~\ref{prop:lines_origin_to_vertical:r3}, and Lemma~\ref{lem:lelvels_visibility_paths:R3} can be obtained as in Section~\ref{sec:R2}.

\begin{prop}\label{prop:self_inverse:R3}
 $\Td$ is self-inverse.
\end{prop}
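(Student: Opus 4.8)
The plan is to verify the identity $\Td\circ\Td=\mathrm{id}$ by direct computation, exactly mirroring the proof of the corresponding $\RR^2$ proposition. First I would fix an arbitrary point $\mttu{x}{y}{z}\in\Rr$, so that $z\neq 0$; this guarantees that $\Td\left(\mttu{x}{y}{z}\right)=\mttu{x/z}{y/z}{1/z}$ is well defined, and moreover its third coordinate $1/z$ is again nonzero, so the image lies in $\Rr$ and $\Td$ may legitimately be applied a second time. Then I would substitute and simplify: applying $\Td$ to $\mttu{x/z}{y/z}{1/z}$ divides the first two coordinates by the third and inverts the third, giving $\mttu{(x/z)/(1/z)}{(y/z)/(1/z)}{1/(1/z)}=\mttu{x}{y}{z}$. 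Since the point was arbitrary, $\Td\circ\Td$ is the identity on $\Rr$, which is exactly the assertion that $\Td$ is self-inverse.

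There is essentially no obstacle here. The only point requiring a word of care is that $\Td$ maps $\Rr$ into $\Rr$ (so that the composition is meaningful), and this is immediate from the equivalence $z\neq 0\iff 1/z\neq 0$. As indicated in the text preceding the statement, the argument is the verbatim three-dimensional analogue of the two-dimensional case, so I would present it in a single short computation.
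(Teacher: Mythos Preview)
Your proposal is correct and follows exactly the same approach as the paper: a direct computation showing $\Td\circ\Td(\mttu{x}{y}{z})=\Td(\mttu{x/z}{y/z}{1/z})=\mttu{x}{y}{z}$ for arbitrary $\mttu{x}{y}{z}\in\Rr$. Your added remark that the image has nonzero third coordinate (so the second application of $\Td$ is legitimate) is a small clarification the paper leaves implicit, but otherwise the arguments are identical.
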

\begin{proof}
Let $\mttu{x}{y}{z} \in \Rr$, then
$\Td \circ \Td (\mttu{x}{y}{z})= \Td (\mttu{x/z}{y/z}{1/z}) = \mttu{x}{y}{z}$.
\end{proof}

\begin{prop}\label{prop:planes_to_planes:R3}
 $\Td$ maps planes to planes. More precisely, if $\pi$ is the plane in $\Rr$ with equation $ax+by+cz+d=0$ then $\Img{\Td}{\pi}$ is the plane in $\Rr$ with equation $ax+by+dz+c=0$.
\end{prop}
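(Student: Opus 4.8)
The plan is to follow the proof of Proposition~\ref{prop:lines_to_lines:R2} one dimension higher. Let $\pi'$ denote the plane with equation $ax+by+dz+c=0$. First I would take an arbitrary point $\mttu{x_0}{y_0}{z_0}\in\Rr$ lying on $\pi$, so that $ax_0+by_0+cz_0+d=0$; since $z_0\neq0$, dividing this equation by $z_0$ yields $a\,\frac{x_0}{z_0}+b\,\frac{y_0}{z_0}+d\,\frac{1}{z_0}+c=0$, which is precisely the statement that $\Td\mttu{x_0}{y_0}{z_0}=\mttu{x_0/z_0}{y_0/z_0}{1/z_0}$ lies on $\pi'$. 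Hence $\Img{\Td}{\pi\cap\Rr}\subseteq\pi'$.

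For the reverse inclusion I would observe that the coefficient swap $c\leftrightarrow d$ carrying $\pi$ to $\pi'$ is an involution, so the identical computation applied to $\pi'$ shows that $\Td$ sends every point of $\pi'\cap\Rr$ into $\pi$. Combining this with Proposition~\ref{prop:self_inverse:R3} (namely $\Td^{-1}=\Td$) gives $\Img{\Td}{\pi\cap\Rr}=\pi'\cap\Rr$, which is the claim.

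Two small points deserve a remark rather than real work. The transformation $\Td$ is defined only on $\Rr$, so ``maps planes to planes'' should be read as ``maps the part of a plane inside $\Rr$ onto the part of another plane inside $\Rr$'' (equivalently, the closures coincide); and one should check that $\pi'$ really is a plane, i.e.\ $(a,b,d)\neq(0,0,0)$, which holds because a plane with $a=b=d=0$ reduces to $z=0$ and does not meet $\Rr$. Beyond these bookkeeping remarks there is no obstacle: the whole proof is the single algebraic identity displayed above, exactly as in the $\RR^2$ case.
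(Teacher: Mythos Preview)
Your argument is correct and is essentially identical to the paper's own proof: both define $\pi'$ by the swapped equation, verify $\Td(\pi)\subseteq\pi'$ via dividing $ax_0+by_0+cz_0+d=0$ by $z_0$, and obtain the reverse inclusion from the self-inverse property of $\Td$. Your additional remarks on the well-definedness of $\pi'$ and the restriction to $\Rr$ are harmless extra care that the paper omits.
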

\begin{proof}
Let $\pi'$ be the plane with equation $ax+by+dz+c=0$. If $\mttu{x_0}{y_0}{z_0}\in\Rr$ is in $\pi$ then $ax_0+by_0+cz_0+d=0$; thus, as $a\frac{x_0}{z_0}+b\frac{y_0}{z_0}+d\frac{1}{z_0}+c=0$, then $\Td\mttu{x_0}{y_0}{z_0}$ is in $\pi'$. Similarly, if $\mttu{x_0}{y_0}{z_0}\in\Rr$ is in $\pi'$ then $\Td^{-1}\mttu{x_0}{y_0}{z_0}=\Td\mttu{x_0}{y_0}{z_0}$ is in $\pi$.
\end{proof}

\begin{prop}\label{prop:incidences:R3}
 $\Td$ preserves incidences between points and planes. More precisely, the point $p$ is on the plane $\pi$ in $\Rr$ if and only if $\Td(p)$ is on the plane $\Img{\Td}{\pi}$ in $\Rr$.
\end{prop}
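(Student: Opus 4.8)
The plan is to mimic the incidence proof given in Section~\ref{sec:R2} essentially verbatim, reducing the claim to the single algebraic equivalence already underlying Proposition~\ref{prop:planes_to_planes:R3}. First I would fix $p=\mttu{x_0}{y_0}{z_0}\in\Rr$, so that $z_0\neq 0$, and a plane $\pi$ in $\Rr$ with equation $ax+by+cz+d=0$. By Proposition~\ref{prop:planes_to_planes:R3}, $\Img{\Td}{\pi}$ is the plane with equation $ax+by+dz+c=0$. Now $p$ lies on $\pi$ precisely when $ax_0+by_0+cz_0+d=0$; dividing this identity by the nonzero quantity $z_0$ (and multiplying back by $z_0$ for the converse), this holds if and only if $a\frac{x_0}{z_0}+b\frac{y_0}{z_0}+d\frac{1}{z_0}+c=0$, which is exactly the statement that $\Td(p)=\mttu{\frac{x_0}{z_0}}{\frac{y_0}{z_0}}{\frac{1}{z_0}}$ satisfies the equation of $\Img{\Td}{\pi}$. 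Reading this chain of equivalences in both directions yields the proposition.

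The only point requiring attention is the legitimacy of dividing by $z_0$, which is guaranteed because every point of $\Rr$ has nonzero last coordinate — this is precisely why $\Td$ is defined on $\Rr$ rather than on all of $\RR^3$. Accordingly, I do not expect any real obstacle: the statement is a routine verification, and the ``if and only if'' is automatically symmetric once the computation is written as an equivalence. If one prefers a more structural phrasing of the ``only if'' direction, one can instead apply the ``if'' direction to the point $\Td(p)$ and the plane $\Img{\Td}{\pi}$, and then invoke that $\Td$ is self-inverse (Proposition~\ref{prop:self_inverse:R3}) together with $\Img{\Td}{\Img{\Td}{\pi}}=\pi$, which follows from applying Proposition~\ref{prop:planes_to_planes:R3} twice; but the direct computation above already makes the detour unnecessary.

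Finally, I would remark that this proposition plays exactly the same role in $\RR^3$ as its planar counterpart: together with Proposition~\ref{prop:planes_to_planes:R3} it supplies the ingredients needed later for the correspondence between $k$-crossing visibility and $(\leq k)$-levels established in Lemma~\ref{lem:lelvels_visibility_paths:R3}. Since it is a purely algebraic translation of the $\RR^2$ argument, no new idea is involved.
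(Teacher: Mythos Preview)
Your proposal is correct and follows exactly the paper's own argument: fix $p=\mttu{x_0}{y_0}{z_0}\in\Rr$ and $\pi:ax+by+cz+d=0$, identify $\Img{\Td}{\pi}$ via Proposition~\ref{prop:planes_to_planes:R3}, and observe that $ax_0+by_0+cz_0+d=0$ is equivalent (dividing by $z_0\neq 0$) to $\Td(p)$ satisfying $ax+by+dz+c=0$. Your additional remarks on the legitimacy of the division and the alternative self-inverse argument are fine but not needed.
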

\begin{proof}
Let $p=(x_0,y_0,z_0)$ be a point in $\Rr$ and let $\pi: ax+by+cz+d=0$ be a plane in $\Rr$. This proof follows from the fact that $(x_0,y_0,z_0)$ satisfies $ax+by+cz+d=0$ if and only if $\mttu{\frac{x_0}{z_0}}{\frac{y_0}{z_0}}{\frac{1}{z_0}}$ satisfies $\Img{\Td}{\pi}: ax+by+dz+c=0$.
\end{proof}

Given a plane $\pi$ in $\Rr$, we denote by $\pi^+$ the set of points in $\pi$ whose third coordinate is greater than zero, and we denote by $\pi^-$ the set of points in $\pi$ whose third coordinate is less than zero.

\begin{prop}\label{prop:T_preserves_order:r3}
 Let $\pi$ be a plane in $\Rr$. Then $\Img{\Td}{\pi^+}=\Img{\Td}{\pi}^+$ and $\Img{\Td}{\pi^-}=\Img{\Td}{\pi}^-$. Moreover, if $p_1,p_2,\ldots,p_k$ are in $\pi^+$ (or they are in $\pi^-$) ordered by their distance to the plane $z=0$ from the closest to the furthest, then $\Td(p_1), \Td(p_2),\ldots,\Td(p_k)$ are in $\Img{\Td}{\pi^+}$ (or they are in $\Img{\Td}{\pi^-}$, respectively), ordered by their distance to the plane $z=0$ from the furthest to the closest.
\end{prop}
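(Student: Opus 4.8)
The plan is to mimic the proof of Proposition~\ref{prop:T_preserves_order:r2} essentially verbatim, replacing ``straight line'' by ``plane'', ``second coordinate'' by ``third coordinate'', and ``the $x$-axis'' by ``the plane $z=0$''. The whole statement reduces to two elementary facts about $\Td$ that are already available in $\RR^3$: it maps planes to planes, and it acts on the third coordinate by $z\mapsto 1/z$, a map that preserves sign and reverses absolute-value order.

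First I would record that, by Proposition~\ref{prop:planes_to_planes:R3}, $\Img{\Td}{\pi}$ is again a plane, and that since $\Td\mttu{x}{y}{z}=\mttu{x/z}{y/z}{1/z}$ with $1/z$ having the same sign as $z$, the transformation $\Td$ sends points of $\Rr$ with positive (resp.\ negative) third coordinate to points with positive (resp.\ negative) third coordinate. Applying this to the points of $\pi$ and to the points of $\Img{\Td}{\pi}$ (using that $\Td$ is self-inverse, Proposition~\ref{prop:self_inverse:R3}) yields $\Img{\Td}{\pi^+}=\Img{\Td}{\pi}^+$ and $\Img{\Td}{\pi^-}=\Img{\Td}{\pi}^-$ at once.

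For the ordering claim I would note that the Euclidean distance from a point $\mttu{x}{y}{z}$ to the plane $z=0$ is exactly $|z|$. Hence if $p_i$ has third coordinate $z_i$, then $\Td(p_i)$ has third coordinate $1/z_i$, and the equivalence $|z_i|<|z_j| \iff |1/z_i|>|1/z_j|$ shows that $\Td$ reverses the order of the $p_i$ by distance to $z=0$. Since the $p_i$ all lie in $\pi^+$ (resp.\ $\pi^-$), the first part guarantees that all the $\Td(p_i)$ lie in $\Img{\Td}{\pi^+}$ (resp.\ $\Img{\Td}{\pi^-}$), so they are ordered by distance to $z=0$ from furthest to closest, as claimed.

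I do not anticipate any real obstacle: the only point worth being explicit about is that the monotone quantity to track in $\RR^3$ is the absolute value of the third coordinate (equivalently, the distance to the plane $z=0$), exactly as $|y|$ served that purpose in the plane; once that identification is in place, the argument is the same reciprocal-monotonicity computation as in Proposition~\ref{prop:T_preserves_order:r2}.
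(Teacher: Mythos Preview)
Your proposal is correct and follows essentially the same approach as the paper's own proof: both rely on the facts that $\Td$ maps planes to planes without changing the sign of the third coordinate, and then use the equivalence $|z_i|<|z_j|\iff |1/z_i|>|1/z_j|$ to obtain the order reversal. Your write-up is a bit more explicit (invoking self-inverseness and spelling out that distance to $z=0$ equals $|z|$), but the argument is the same.
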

\begin{proof}
As $\Td$ maps planes to planes and it does not change the sign of the third coordinate, then $\Img{\Td}{\pi^+}=\Img{\Td}{\pi}^+$ and $\Img{\Td}{\pi^-}=\Img{\Td}{\pi}^-$. If the third coordinates of $p_i$ and $p_j$ are $z_i$ and $z_j$, respectively, then the third coordinates of $\Td(p_i)$ and $\Td(p_j)$ are $1/z_i$ and $1/z_j$, respectively. This proof follows from the fact that $|z_i|<|z_j|$ if and only if $|1/z_i|>|1/z_j|$.
\end{proof}

Given a set $D\subset \Rr$, we denote by $\overline{D}$ the closure of $D$ in $\RR^3$.

\begin{prop}\label{prop:lines_origin_to_vertical:r3}
   Let $L$ be a straight line in $\Rr$. Then $O\in \overline{L}$ if and only if $\Img{\Td}{L}$ is a vertical line.
\end{prop}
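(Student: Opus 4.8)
The statement is the $\RR^3$ analogue of Proposition~\ref{prop:lines_origin_to_vertical:r2}, so the plan is to mirror that proof, with the twist that a line in $\RR^3$ is no longer cut out by a single equation. First I would pin down what "vertical line" means here: a line parallel to the $z$-axis, i.e. a line whose $x$- and $y$-coordinates are constant. Such a line can be described as the intersection $\{x = \alpha\} \cap \{y = \beta\}$ of two planes whose equations do not involve $z$.

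The forward direction: suppose $O \in \overline{L}$. Pick any plane $\pi$ containing $L$; since $\overline{L}$ passes through the origin, I can choose $\pi$ so that $\overline{\pi}$ has equation $ax + by + cz = 0$ (no constant term), because the pencil of planes through $L$ contains a plane through $O$ — in fact every plane containing $\overline{L}$ passes through $O$. By Proposition~\ref{prop:planes_to_planes:R3}, $\Img{\Td}{\pi}$ lies on the plane $ax + by + 0 \cdot z + c = 0$... but I must be careful: Proposition~\ref{prop:planes_to_planes:R3} sends $ax+by+cz+d=0$ to $ax+by+dz+c=0$, so a plane with $d = 0$ maps to $ax + by + cz = 0$ again only if the roles match; rather, $ax+by+cz+0=0$ maps to $ax+by+0\cdot z + c = 0$, a plane with no $z$-term. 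So every plane through $\overline{L}$ maps to a plane with no $z$-dependence, and $\Img{\Td}{L}$ is the intersection of two such planes, hence has constant $x,y$ — a vertical line. The converse runs the same way using that $\Td$ is self-inverse (Proposition~\ref{prop:self_inverse:R3}): if $\Img{\Td}{L}$ is vertical, it is the intersection of two $z$-free planes, whose $\Td$-images (by Proposition~\ref{prop:planes_to_planes:R3}) are planes with no constant term, i.e. planes through $O$; their intersection $\overline{L}$ therefore contains $O$.

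The step I expect to need the most care is the claim that every plane containing $\overline{L}$, when $\overline{L} \ni O$, has zero constant term — but this is immediate since such a plane passes through the origin — together with checking the degenerate cases: if $\overline{L}$ is itself a coordinate-type line one must make sure two genuinely distinct $z$-free planes can be chosen, and one should note $\Img{\Td}{L}$ stays within $\Rr$ (third coordinate nonzero) exactly because $L \subset \Rr$, so "vertical line in $\Rr$" is the right target object. Since the excerpt explicitly says this proposition "can be obtained as in Section~\ref{sec:R2}," I would keep the write-up to a couple of lines, citing Proposition~\ref{prop:planes_to_planes:R3} and Proposition~\ref{prop:self_inverse:R3} and pointing to the proof of Proposition~\ref{prop:lines_origin_to_vertical:r2} as the template.
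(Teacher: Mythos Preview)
Your argument is correct, but the paper does not actually mirror the Section~\ref{sec:R2} proof here. Instead it gives a one-line parametric computation: a line through $O$ in $\Rr$ is parametrised as $t\,(d_1,d_2,d_3)$ with $d_3\neq 0$, and $\Td(t d_1,t d_2,t d_3)=\bigl(d_1/d_3,\;d_2/d_3,\;1/(t d_3)\bigr)$, which is visibly a vertical line; conversely $\Td(a,b,t)=\tfrac{1}{t}(a,b,1)$ sends a vertical line to a line through $O$. Your route---writing $L$ as the intersection of two planes through $O$, applying Proposition~\ref{prop:planes_to_planes:R3} to see each image plane has no $z$-term, and intersecting---is the genuine analogue of the $\RR^2$ proof (which worked via the single equation of a line), and it buys you a uniform ``equations-only'' story across both sections. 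The cost is the extra bookkeeping you flagged: you must pick two planes through $\overline{L}$ whose images are non-parallel, which is easy once noted (with direction $(d_1,d_2,d_3)$, $d_3\neq 0$, one may take normals $(1,0,-d_1/d_3)$ and $(0,1,-d_2/d_3)$), but the paper's parametric calculation sidesteps this entirely and is shorter.
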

\begin{proof}
As $\Td\mttu{td_1}{td_2}{td_3}= \mttu{\frac{d_1}{d_3}}{\frac{d_2}{d_3}}{\frac{1}{td_3}}$, $\Td^{-1}\mttu{a}{b}{t}=\Td\mttu{a}{b}{t}=\frac{1}{t}\mttu{a}{b}{1}$, $\Td$ is self-inverse and $O\in \overline{L}$ if and only if $\Img{\Td}{L}$ is a vertical line.
\end{proof}

\begin{lem}\label{lem:lelvels_visibility_paths:R3}
 Let $\Ar$ be a set of planes. Then:
 \begin{enumerate}
  \item $\visp{k}{O}{\Ar}{+} = \Img{\Td}{ \levp{k}{\Img{\Td}{\Ar}}{+} }$.
  \item $\visp{k}{O}{\Ar}{-} = \Img{\Td}{ \levp{k}{\Img{\Td}{\Ar}}{-} }$.
 \end{enumerate}
\end{lem}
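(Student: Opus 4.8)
The plan is to transcribe the proof of Lemma~\ref{lem:lelvels_visibility_paths:R2} into $\RR^3$, now invoking the three-dimensional statements Proposition~\ref{prop:self_inverse:R3}, Proposition~\ref{prop:planes_to_planes:R3}, Proposition~\ref{prop:incidences:R3}, Proposition~\ref{prop:T_preserves_order:r3}, and Proposition~\ref{prop:lines_origin_to_vertical:r3} in place of their planar counterparts. I will prove part~1; part~2 is obtained by replacing $\Rr^+$ with $\Rr^-$, "lying above" with "lying below", and the $(\leq k)$-level-region with the $(\leq k)$-lower-level-region, exactly as in the planar case.

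First I would fix a point $p\in\Rr^+$ lying on some element $D\in\Ar$ and let $L$ be the line through $p$ and $O$, so that $p\in L^+$. By Proposition~\ref{prop:planes_to_planes:R3} we have $\Img{\Td}{D}\in\Img{\Td}{\Ar}$, and since $\Td$ preserves incidences (Proposition~\ref{prop:incidences:R3}), $\Td(p)\in\Img{\Td}{D}$. The heart of the argument is to convert "the open segment $Op$ crosses at most $k$ elements of $\Ar$" into a level condition. By Proposition~\ref{prop:lines_origin_to_vertical:r3}, $\Img{\Td}{L}$ is a vertical line through $\Td(p)$, and by Proposition~\ref{prop:T_preserves_order:r3}, $\Td$ maps $L^+$ onto $\Img{\Td}{L}^+$ while reversing the order of points by distance to the plane $z=0$. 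The open segment $Op$ consists precisely of the points of $L^+$ strictly closer to $z=0$ than $p$, so its image under $\Td$ is precisely the set of points of $\Img{\Td}{L}$ with third coordinate strictly larger than that of $\Td(p)$, i.e.\ the open vertical ray emanating upward from $\Td(p)$. Consequently an element $D'\in\Ar$ meets the open segment $Op$ if and only if $\Img{\Td}{D'}$ meets that upward ray, which happens if and only if $\Img{\Td}{D'}$ lies above $\Td(p)$. Hence $Op$ crosses at most $k$ elements of $\Ar$ if and only if at most $k$ elements of $\Img{\Td}{\Ar}$ lie above $\Td(p)$, that is, if and only if $\Td(p)$ belongs to the portion of $\Img{\Td}{D}$ in $\levp{k}{\Img{\Td}{\Ar}}{+}$. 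Applying $\Td$ to both characterizations and using that $\Td$ is self-inverse (Proposition~\ref{prop:self_inverse:R3}) yields $\visp{k}{O}{\Ar}{+}=\Img{\Td}{\levp{k}{\Img{\Td}{\Ar}}{+}}$, as desired.

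The step I expect to be the main obstacle is not conceptual but bookkeeping: one must have the genericity assumptions in force (no element of $\Ar$ passing through $O$, none meeting $L$ only "at infinity", no two elements meeting on the plane $z=0$) so that the correspondence between crossings of $Op$ and elements of $\Img{\Td}{\Ar}$ above $\Td(p)$ is exact; and one must be careful that "above/below" refers to the third coordinate while Proposition~\ref{prop:T_preserves_order:r3} is phrased in terms of distance to $z=0$ — the two notions agree on $\Rr^+$ and are interchanged on $\Rr^-$, which is exactly why part~2 must use the $(\leq k)$-lower-level-region. Beyond these points the argument is a line-by-line translation of the $\RR^2$ proof.
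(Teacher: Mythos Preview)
Your proposal is correct and follows exactly the approach of the paper: the paper's proof of this lemma consists solely of the sentence ``Similar to the proof of Lemma~\ref{lem:lelvels_visibility_paths:R2},'' and what you have written is precisely that similar proof, spelled out in full with the $\RR^3$ propositions substituted for their planar counterparts. Your additional remarks on genericity and on the distinction between ``above'' and ``distance to $z=0$'' on $\Rr^-$ are accurate clarifications that the paper leaves implicit.
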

\begin{proof}
Similar to the proof of Lemma~\ref{lem:lelvels_visibility_paths:R2}.
\end{proof}

The proofs of Theorem~\ref{thm:planes_visibility} and Proposition~\ref{prop:complexity_planes} follow from Theorem~\ref{thm:planes_levels} and Theorem~\ref{thm:levels_hyperplanes}, in a similar way as in the proof of Theorem~\ref{thm:lines_visibility} and the proof of Proposition~\ref{prop:complexity_lines} in Section~\ref{sec:R2}.

\begin{thm}[Chan et al. \cite{planes_levels2}]\label{thm:planes_levels}
 Let $\Ar$ be an arrangement of $n$ planes in $\RR^3$. Then $\levp{k}{\Ar}{}$ can be computed in $O(n \log n + k^2n)$ time.
\end{thm}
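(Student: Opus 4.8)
The plan is to produce an explicit cell complex describing the $(\leq k)$-level-region of $\Ar$; the $2$-faces of that complex lying on the planes of $\Ar$ are then exactly $\levp{k}{\Ar}{}$, so it suffices to build the former. For $p\in\RR^3$ write $\ell(p)$ for the number of planes of $\Ar$ strictly above $p$, so the region of interest is $\{p:\ell(p)\leq k\}$. The overall idea is to localize the computation with a shallow cutting and then process one cell at a time.

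First I would invoke the shallow cutting lemma for planes in $\RR^3$ (Matou\v{s}ek), together with an efficient construction of such a cutting (Ramos; Chan): for the given $k\leq n$ one obtains a family $\mathcal{C}$ of $O(n/k)$ cells, each a vertical prism over a triangle that is unbounded in the $+z$ direction, such that $\bigcup_{C\in\mathcal{C}}C\supseteq\{p:\ell(p)\leq k\}$, each cell $C$ is crossed by at most $O(k)$ planes of $\Ar$ (its conflict list $\Ar_C$), and $\mathcal{C}$ together with all conflict lists is produced in $O(n\log n)$ time. Because $C$ is unbounded upward, any plane of $\Ar$ not in $\Ar_C$ lies entirely below $C$ and so contributes nothing to $\ell(p)$ for $p\in C$; hence for every $p\in C$, $\ell(p)$ is exactly the number of planes of $\Ar_C$ lying above $p$.

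Second I would process each cell $C$ on its own: construct the arrangement of the $O(k)$ planes of $\Ar_C$ clipped to $C$ and keep precisely the faces whose points have at most $k$ planes of $\Ar_C$ above them. Such a clipped arrangement has combinatorial complexity $O(k^3)$ and, with the level labels, is computed in $O(k^3)$ time (one could instead build only the $(\leq k)$-level of $\Ar_C$ inside $C$). The union over $C\in\mathcal{C}$ of the retained faces is a cell complex for $\{p:\ell(p)\leq k\}$, since the cells cover that region and the local answers agree along shared cell walls; reading off the $2$-faces that lie on planes of $\Ar$ yields $\levp{k}{\Ar}{}$, and an optional final pass merges faces split by cell walls. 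The running time is $O(n\log n)$ for the cutting plus $\sum_{C\in\mathcal{C}}O(k^3)=O\!\left(\tfrac{n}{k}\cdot k^3\right)=O(k^2n)$ for the cells, i.e. $O(n\log n+k^2n)$.

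The only step carrying real content is the first: the existence of a shallow cutting of the $(\leq k)$-level with $O(n/k)$ cells and conflict lists of size $O(k)$, and an $O(n\log n)$-time algorithm that builds it. The existence is the classical Clarkson--Shor random-sampling argument: take a random sample $R$ of $\Theta(n/k)$ planes and let the cells be the upward prisms of the vertical decomposition of the set of points with $O(\log\tfrac{n}{k})$ or fewer planes of $R$ above them; there are $O(n/k)$ such prisms, each is crossed by $O(k)$ planes of $\Ar$ in expectation, and their union contains $\{p:\ell(p)\leq k\}$ (this same argument is the standard proof that the $(\leq k)$-level has complexity $O(nk^2)$). Turning the existence statement into a deterministic, or high-probability, $O(n\log n)$-time construction that also reports the conflict lists is the genuinely delicate part, and is the substance behind this result. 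A minor bookkeeping point, but one that streamlines the second step, is to take the cells unbounded upward as above, so that a cell carries no level offset relative to its conflict list and adjacent cells stay consistent, which is what makes the gluing purely combinatorial.
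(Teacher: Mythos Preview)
The paper does not prove this theorem at all: it is quoted as a black-box result of Chan et al.\ and used without argument. Your sketch is a faithful outline of the shallow-cutting approach that underlies that cited result, so there is nothing to compare against in the paper itself.
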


\begin{thm}[Clarkson et al. \cite{clarkson1989applications}]\label{thm:levels_hyperplanes}
  Let $k\geq 1$. Then the maximum combinatorial complexity of $(\leq k)$-level on arrangements of $n$ hyperplanes in $\RR^d$ is
  $\Theta \left( n^{ \lfloor d/2 \rfloor } k^{ \lceil d/2 \rceil } \right)$.
\end{thm}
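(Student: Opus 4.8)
The statement asserts a tight bound, so the plan is to establish an $O(n^{\lfloor d/2\rfloor}k^{\lceil d/2\rceil})$ upper bound on the number of faces of the arrangement at level $\le k$ and then exhibit an arrangement meeting it; throughout, $d$ is a fixed constant, so implied constants may depend on it. We may assume $k\le n$, since otherwise every point has at most $n\le k$ hyperplanes above it and the $(\leq k)$-level is the whole arrangement, of complexity $\Theta(n^d)=O(n^{\lfloor d/2\rfloor}k^{\lceil d/2\rceil})$. After an infinitesimal perturbation and a generic rotation we may also assume the arrangement is simple and has no vertical hyperplane, so that the level is constant on the relative interior of every face. It then suffices to bound, for each $i\le d-1$, the number of $i$-faces at level $\le k$: each $d$-cell at level $\le k$ is incident to a vertex of level at most $k+d$, so full-dimensional cells cost only a constant factor more ($k+d=\Theta(k)$ since $d$ is constant).

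For the upper bound the plan is the Clarkson--Shor sampling argument. Fix $i\le d-1$ and form a sample $R$ by including each hyperplane independently with probability $p=\min\{1/k,\tfrac12\}$. An $i$-face $f$ of the full arrangement has a defining set of $d-i$ hyperplanes (those through it) and a conflict set of exactly $\ell(f)$ hyperplanes, where $\ell(f)$ is its level (those above it); in a simple arrangement these are the only hyperplanes that can matter, and $f$ lies on the upper envelope of $R$ iff all of its defining set is sampled and none of its conflict set is. By the Clarkson--Shor sampling lemma, $\sum_{f}p^{\,d-i}(1-p)^{\,\ell(f)}=O\bigl(\mathbb{E}[\tau(|R|)]\bigr)$, where $\tau(m)$ is the complexity of the upper envelope of $m$ hyperplanes in $\RR^d$; since that envelope is the boundary of a $d$-dimensional polyhedron with at most $m$ facets, the Upper Bound Theorem gives $\tau(m)=O(m^{\lfloor d/2\rfloor})$, and hence (using $pn\ge1$) $\mathbb{E}[\tau(|R|)]=O((pn)^{\lfloor d/2\rfloor})$. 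Restricting the sum to faces of level at most $k$ and using $(1-p)^{\,\ell(f)}\ge(1-p)^{\,k}=\Omega(1)$ there, we get that the number of $i$-faces of level $\le k$ is $O\bigl((pn)^{\lfloor d/2\rfloor}p^{-(d-i)}\bigr)=O\bigl(n^{\lfloor d/2\rfloor}k^{\lceil d/2\rceil-i}\bigr)$, using $d-\lfloor d/2\rfloor=\lceil d/2\rceil$; summing over $i$ gives $O(n^{\lfloor d/2\rfloor}k^{\lceil d/2\rceil})$.

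For the matching lower bound I would take products of planar arrangements. The planar building block is the $\Omega(nk)$ lower bound for $(\leq k)$-levels of lines (the one behind Proposition~\ref{prop:complexity_lines}, due to Alon et al.); explicitly, one can take $\Theta(n/k)$ clusters of $\Theta(k)$ generic lines, the $c$-th cluster confined to a tiny disk around $(c,c^2)$ with its lines almost tangent to the parabola $y=x^2$ there --- since the tangent at any point of the parabola lies strictly below every other point of it, every line of every other cluster passes below each cluster's disk, so each of the $\Theta(k^2)$ intersections inside a cluster has only $O(k)$ lines above it, giving $\Theta(nk)$ vertices of level $O(k)$. Now write $\RR^d$ as a product of $e=\lfloor d/2\rfloor$ coordinate $2$-planes together with, when $d$ is odd, one extra coordinate line carrying $k$ parallel ``point'' hyperplanes. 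Place a copy of the planar block (with $n$ and $k$ scaled down by a factor depending only on $d$) in each $2$-plane, lift every planar line to the hyperplane it spans with all complementary coordinate directions, and take the global ``above'' direction to be the sum of the per-factor vertical directions, so that each lifted hyperplane is non-vertical and its ``above'' side projects onto the ``above'' side of the corresponding line. Three lines in a common $2$-plane share no point, so each vertex of the lifted arrangement uses exactly two hyperplanes from each $2$-plane (and, if $d$ is odd, one from the line); hence it is a product of one planar vertex per plane (and one of the points), and its level is the sum of the levels of those pieces. Giving each factor a level budget of $k/d$ yields $\Omega((nk)^{e})$ such vertices of total level $\le k$ when $d=2e$, and an extra factor of $\Theta(k)$ from the line when $d=2e+1$, i.e.\ $\Omega(n^{\lfloor d/2\rfloor}k^{\lceil d/2\rceil})$ vertices of level $\le k$ in both cases.

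The routine parts will be the Upper Bound Theorem input and the perturbations that make the lower-bound arrangements simple without disturbing the count of low-level vertices. The steps that need genuine care are, first, the precise statement and use of the Clarkson--Shor sampling lemma --- one must verify the defining-set/conflict-set description of envelope faces and carry the exponent arithmetic through \emph{every} face dimension, not only vertices --- and, second, fixing a single coherent ``above'' direction in the product construction so that levels really do add and hyperplanes from different blocks stay in general position. The hypothesis $k\ge1$ is used only in choosing $p\le\tfrac12<1$; for $k=0$ the $(\leq k)$-level is just an envelope, of complexity $\Theta(n^{\lfloor d/2\rfloor})$, which the formula does not capture.
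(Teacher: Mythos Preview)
The paper does not prove this theorem at all: it is stated as a citation to Clarkson and Shor and used as a black box (together with Theorem~\ref{thm:planes_levels}) to derive Theorem~\ref{thm:planes_visibility} and Proposition~\ref{prop:complexity_planes}. So there is no ``paper's own proof'' to compare your attempt against.

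That said, your sketch is essentially the original Clarkson--Shor argument and is sound in outline. Two small points of care. First, for $i$-faces with $i\ge 1$ the claim ``$f$ lies on the upper envelope of $R$ iff all of its defining set is sampled and none of its conflict set is'' is not literally true: when some non-defining, non-conflicting hyperplanes are missing from $R$, the face $f$ of the full arrangement is absorbed into a larger $i$-face of $R$, and several such $f$'s may land in the same envelope face. The clean fix (which you already gesture at) is to run the sampling argument only for vertices, where the correspondence is exact, and then charge every bounded face of level $\le k$ to an incident vertex of level $\le k+d$; unbounded faces can be handled by intersecting with a large box. Second, in the lower-bound product construction you should state explicitly that each planar factor uses lines of the form $y=ax+b$ (so $b>0$ after reflection if needed), which is what makes the lifted hyperplanes non-vertical with respect to the summed direction and makes ``above'' agree factor by factor; you flag this as needing care, and it does.

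None of this is a gap, just the expected polishing of a standard proof that the paper itself chose to import rather than reproduce.
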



\bibliographystyle{plain}
\bibliography{kvisibility}

\end{document}